\newcommand{\R}{\mathbb{R}}
\newcommand{\N}{\mathbb{N}}
\newcommand{\B}{\mathbb{B}}
\newcommand{\prob}{\mathbb{P}}
\newcommand{\teq}{\triangleq}
\newcommand{\with}{\mathbin{\&}}
\newcommand{\Met}{\mathsf{Met}}
\newcommand{\Set}{\mathsf{Set}}
\newcommand{\RSRel}{\mathsf{RSRel}}
\newcommand{\db}{\mathsf{db}}
\newcommand{\id}{\mathrm{id}}
\newcommand{\dist}{D}
\newcommand{\distmonad}{\mathcal{D}}
\newcommand{\T}{T}
\newcommand{\Tcons}{\bigcirc}
\renewcommand{\nearrow}{\mathbin{\stackrel{\rm ne}{\arrow}}}
\newcommand{\rearrow}{\mathbin{\stackrel{\rm re}{\arrow}}}
\newcommand{\darrow}{\mathbin{\dot{\arrow}}}
\newcommand{\dlarrow}{\mathbin{\dot{\multimap}}}
\def\blet{\mathop{\textbf{let}}}
\def\bbind{\mathop{\textbf{bind}}}
\def\bin{\mathop{\textbf{in}}}
\def\btrue{\mathop{\textbf{true}}}
\def\textbfalse{\mathop{\textbf{false}}}
\def\binl{\mathop{\textbf{inl}}\nolimits}
\def\binr{\mathop{\textbf{inr}}\nolimits}
\def\bcase{\mathop{\textbf{case}}\nolimits}
\def\bof{\mathop{\textbf{of}}}
\def\myrel#1\over#2{\mathrel{\mathop{\kern0pt#2}\limits_{#1}}}
\def\breturn{\mathop{\textbf{return}}}
\def\rset{\mathop{\textsf{set}}}
\def\kl{\mathrm{kl}}
\def\lapD{\mathop{\textbf{Laplace}}}
\def\berD{\mathop{\textbf{Bernoulli}}}
\def\poiD{\mathop{\textbf{Poisson}}}
\def\gauD{\mathop{\textbf{Gaussian}}}
\def\norD{\mathop{\textbf{Normal}}}
\def\dMax{\mathsf{MD}}
\def\dStat{\mathsf{SD}}
\def\dSkew{\mathsf{AD}}
\def\dKL{\mathsf{KL}}
\def\dHellinger{\mathsf{HD}}
\def\dChi{\mathsf{XD}}
\def\rDP{\mathsf{DPR}}
\def\rKL{\mathsf{KLR}}
\def\rHD{\mathsf{HDR}}
\def\rChi{\mathsf{XDR}}
\newcommand{\mT}{\mathcal{T}}
\newcommand{\arrow}{\rightarrow}
\newcommand{\Arrow}{\Rightarrow}
\newcommand{\klift}[1]{{#1}^\dagger}
\newcommand{\pklift}[1]{{#1}^\ddagger}
\newcommand{\unit}{{\bf I}}
\newcommand{\ox}{\mathbin{\otimes}}
\newcommand{\dunit}{\dot{\bf  I}}
\newcommand{\dox}{\mathbin{\dot\otimes}}
\newcommand{\dmultimap}{\mathbin{\dot\multimap}}
\newcommand{\dtimes}{\mathbin{\dot\times}}
\newcommand{\dpitchfork}{\mathbin{\dot\pitchfork}}
\newcommand{\ddunit}{\ddot{\bf  I}}
\newcommand{\ddox}{\mathbin{\ddot\otimes}}
\newcommand{\ddpitchfork}{\mathbin{\ddot\pitchfork}}
\newcommand{\fa}[1]{\forall{#1}~.~}
\newcommand{\sem}[1]{\llbracket #1\rrbracket}
\newcommand{\adjunction}[3]{
  \ar@<.4pc>[#1]^-{#2} \ar@{}[#1]|-*=0[@]{\bot} \ar@<-.4pc>@{<-}[#1]_-{#3}
}
\newcommand{\EE}{\mathbb{E}}
\newcommand{\FF}{\mathbb{F}}
\newcommand{\BB}{\mathbb{B}}
\newcommand{\CC}{\mathbb{C}}
\newcommand{\DD}{\mathbb{D}}
\newcommand{\BLift}[1]{\mathbf{!Lift}_{#1}}
\newcommand{\RLift}[1]{\mathbf{RLift}_{#1}}
\newcommand{\Asign}[1]{\mathbf{Asign}_{#1}}
\newcommand{\Comp}{\mathbf{Comp}}
\newcommand{\Lift}[1]{\mathbf{Lift}_{#1}}
\newcommand{\Ord}{\mathbf{Ord}}
\newcommand{\CLat}{\mathbf{CLat}}
\renewcommand{\epsilon}{\varepsilon}
\newcommand{\rname}[1]{\quad #1}
\newcommand{\brname}[1]{(#1)\xspace}
\newcommand{\ruleconst}{\ensuremath{\brname{\mathrm{Const}}}}
\newcommand{\rulevar}{\ensuremath{\brname{\mathrm{Var}}}}
\newcommand{\ruleitens}{\ensuremath{\brname{\otimes I}}}
\newcommand{\ruleetens}{\ensuremath{\brname{\otimes E}}}
\newcommand{\ruleiamp}{\ensuremath{\brname{\& I}}}
\newcommand{\ruleeamp}{\ensuremath{\brname{\& E}}}
\newcommand{\ruleisuml}{\ensuremath{\brname{+ I_l}}}
\newcommand{\ruleisumr}{\ensuremath{\brname{+ I_r}}}
\newcommand{\ruleesum}{\ensuremath{\brname{+ E}}}
\newcommand{\ruleiapp}{\ensuremath{\brname{\multimap I}}}
\newcommand{\ruleeapp}{\ensuremath{\brname{\multimap E}}}
\newcommand{\ruleibang}{\ensuremath{\brname{! I}}}
\newcommand{\ruleebang}{\ensuremath{\brname{! E}}}
\newcommand{\ruleiunit}{\ensuremath{\brname{1I}}}
\newtheorem{theorem}{Theorem}
\newtheorem{proposition}{Proposition}
\newtheorem{corollary}{Corollary}
\newtheorem{lemma}{Lemma}
\newtheorem{definition}{Definition}
\newcommand{\dfuzzvar}{
  \inferrule
  { (x :_r \sigma) \in \Gamma \\ r \geq 1 }
  { \Gamma \vdash x : \sigma }
  \rname{\rulevar}
}
\newcommand{\dfuzzconst}{
  \inferrule
  { k \in \R }
  { \Gamma \vdash k : \R }                    \rname{\ruleconst}
}
\newcommand{\dfuzziunit}{
  \inferrule
  { }
  { \Gamma \vdash () : 1 } \rname{\ruleiunit}
}
\newcommand{\dfuzzitens}{
  \inferrule
  { \Gamma \vdash e_1 : \sigma \\ \Delta \vdash e_2 : \tau }
  { \Gamma + \Delta \vdash (e_1, e_2) : \sigma \otimes \tau } \rname{\ruleitens}
}
\newcommand{\dfuzzetens}{
  \inferrule
  { \Gamma \vdash e : \sigma_1 \otimes \sigma_2  \\
    \Delta, x :_r \sigma_1, y :_r \sigma_2 \vdash e' : \tau }
  { r\Gamma + \Delta \vdash \blet {(x, y)} = e \bin e' : \tau} \rname{\ruleetens}
}
\newcommand{\dfuzziamp}{ \inferrule { \Gamma \vdash e_1 : \sigma \\ \Gamma \vdash e_2 : \tau }
  { \Gamma \vdash \langle e_1, e_2 \rangle : \sigma \with \tau } \rname{\ruleiamp}
}
\newcommand{\dfuzzeamp}{
  \inferrule
  { \Gamma \vdash e : \sigma_1 \with \sigma_2 }
  { \Gamma \vdash \pi_i \;e : \sigma_i }                  \rname{\ruleeamp}
}
\newcommand{\dfuzziapp}{
  \inferrule
    { \Gamma, x :_1 \sigma \vdash  e : \tau }
  { \Gamma \vdash  \lambda x.\, e : \sigma \multimap \tau } \rname{\ruleiapp}
}
\newcommand{\dfuzzeapp}{
  \inferrule
  { \Gamma \vdash  e_1 : \sigma \multimap \tau \\
    \Delta \vdash  e_2 : \sigma }
  { \Gamma + \Delta \vdash e_1\;e_2 : \tau }           \rname{\ruleeapp}
}
\newcommand{\dfuzzibang}{
  \inferrule
  { \Gamma \vdash e : \sigma }
  { r\Gamma \vdash {{!}e} : {{!_r}\sigma} } \rname{\ruleibang}
}
\newcommand{\dfuzzebang}{
  \inferrule
  { \Gamma \vdash e_1 : {{!_s}\sigma} \\ \Delta, x :_{rs} \sigma \vdash e_2 : \tau}
  { r\Gamma + \Delta \vdash \blet {{!}x} = e_1 \bin e_2 : \tau } \rname{\ruleebang}
}
\newcommand{\dfuzzisuml}{
  \inferrule
  { \Gamma \vdash e : \sigma }
  { \Gamma \vdash \binl e : \sigma + \tau } \rname{\ruleisuml}
}
\newcommand{\dfuzzisumr}{
  \inferrule
  { \Gamma \vdash e : \tau }
  { \Gamma \vdash \binr e : \sigma + \tau } \rname{\ruleisumr}
}
\newcommand{\dfuzzesum}{
  \inferrule
  { \Gamma \vdash e : \sigma_1 + \sigma_2 \\
    \Delta, x :_r \sigma_1 \vdash e_l : \tau \\
    \Delta, y :_r \sigma_2 \vdash e_r : \tau }
  { r\Gamma + \Delta \vdash \bcase e \bof
    \binl x.\, e_l
    \mid \binr y.\, e_r : \tau } \rname{\ruleesum}
}
\def\aa{\aanote}
\def\mg{\mgnote}
\def\sk{\sknote}
\begin{document}

\title{Probabilistic Relational Reasoning via Metrics}

\author{
  \IEEEauthorblockN{Arthur Azevedo de Amorim}
  \IEEEauthorblockA{
    Carnegie Mellon University\\
    Pittsburgh, PA
  } \and
  \IEEEauthorblockN{Marco Gaboardi}
  \IEEEauthorblockA{
    University at Buffalo\\
    Buffalo, NY
  } \and
  \IEEEauthorblockN{Justin Hsu}
  \IEEEauthorblockA{
    University of Wisconsin\\
    Madison, WI
  } \and
  \IEEEauthorblockN{Shin-ya Katsumata}
  \IEEEauthorblockA{
    National Institute of Informatics\\
    Tokyo, Japan
  }
}

\IEEEoverridecommandlockouts
\IEEEpubid{\makebox[\columnwidth]{978-1-7281-3608-0/19/\$31.00~
    \copyright2019 IEEE \hfill} \hspace{\columnsep}\makebox[\columnwidth]{ }}
\maketitle

\begin{abstract}
  The Fuzz programming language by Reed and Pierce uses an elegant linear type
  system combined with a monad-like type to express and reason about
  probabilistic sensitivity properties, most notably $\epsilon$-differential
  privacy. We show how to extend Fuzz to capture more general \emph{relational}
  properties of probabilistic programs, with \emph{approximate}, or
  $(\epsilon, \delta)$-differential privacy serving as a leading example. Our
  technical contributions are threefold. First, we introduce the categorical
  notion of \emph{comonadic lifting} of a monad to model composition properties
  of probabilistic divergences. Then, we show how to express relational
  properties in terms of sensitivity properties via an adjunction we call the
  \emph{path construction}.  Finally, we instantiate our semantics to model the
  terminating fragment of Fuzz extended with types carrying information about
  other divergences between distributions.
\end{abstract}

\section{Introduction}
\label{sec:introduction}

Over the past decade, differential privacy has emerged as a robust,
compositional notion of privacy, imposing rigorous bounds on what
database queries reveal about private data. Formally, a probabilistic
database query $f$ is $\epsilon$-differentially private if, given two
pairs of \emph{adjacent} databases $X_1$ and $X_2$---that is,
databases differing in at most one record---we have
\begin{align}
  \prob(f(X_i) \in U) \leq e^\epsilon\prob(f(X_j) \in U) \quad (i,j \in \{1,2\}),
  \label{eq:epsilon-dp}
\end{align}
where $U$ ranges over arbitrary sets of query results. Intuitively,
the parameter $\epsilon$ measures how different the result
distributions $f(X_1)$ and $f(X_2)$ are---the smaller $\epsilon$ is,
the less the output depends on any single record in the input
database.


The strengths and applications of differential privacy have prompted the
development of a range of verification techniques, in particular approaches
based on linear types such as the Fuzz language~\citep{Reed:2010}.  These
systems exploit the fact that $\epsilon$-differential privacy is equivalent to a
\emph{sensitivity} property, a guarantee that applies to \emph{arbitrary} pairs
of databases:
\begin{equation} \label{eq:epsilon-dp-sensitivity} \dMax(f(X_1),
  f(X_2)) \leq \epsilon \cdot d_{DB}(X_1, X_2),
\end{equation}
where $d_{DB}$ measures how similar the input databases are (for instance, via
the Hamming distance on sets), and $\dMax$, the \emph{max divergence}, is the
smallest value of $\epsilon$ for which \eqref{eq:epsilon-dp} holds. In other
words, $f$ is $\epsilon$-Lipschitz continuous, or
\emph{$\epsilon$-sensitive}. Sensitivity has pleasant properties for formal
verification; for example, the sensitivity of the composition of two functions
is the product of their sensitivities. By leveraging these composition
principles, Fuzz can track the sensitivity of a function in its type, reducing a
proof of differential privacy for an algorithm to simpler sensitivity checks
about its components.

However, not all distance bounds between distributions can be converted into
sensitivity properties. One example is $(\epsilon, \delta)$-differential privacy
\citep{dwork2006our}, a relaxation of $\epsilon$-differential privacy that allows
privacy violations with a small probability $\delta$; in return,
$(\epsilon, \delta)$-differential privacy can allow significantly more accurate
data analyses. Superficially, its definition resembles \eqref{eq:epsilon-dp}: a
query $f$ is $(\epsilon, \delta)$-differentially private if, for all pairs of
adjacent input databases $X_1$ and $X_2$, we have
\begin{align}
  \prob(f(X_i) \in U) \leq e^\epsilon\prob(f(X_j) \in U) + \delta \quad
  (i, j \in \{1,2\}).
  \label{eq:epsilon-delta-dp}
\end{align}
Setting $\delta = 0$ recovers the original definition. Introducing the
\emph{skew divergence} $\dSkew_\epsilon$ \citep{BartheO13},
$(\epsilon, \delta)$-privacy is equivalent to a bound
$\dSkew_\epsilon(f(X_1), f(X_2)) \leq \delta$ for adjacent databases.

Despite the similarity between the two definitions, Fuzz could not handle
$(\epsilon, \delta)$-differential privacy, because it cannot be stated directly
in terms of function sensitivity.  This is possible for $\epsilon$-differential
privacy because \eqref{eq:epsilon-dp} can be recast as the bound
$\dMax(f(X_1), f(X_2)) \leq \epsilon$, which is equivalent to the sensitivity
property \eqref{eq:epsilon-dp-sensitivity} because the max divergence is a
proper metric satisfying the triangle inequality.  In contrast, the skew
divergence does \emph{not} satisfy the triangle inequality and does not scale up
smoothly when the inputs $X_1$ and $X_2$ are farther apart---for instance, an
$(\epsilon, \delta)$-private function $f$ usually does not satisfy
$\dSkew_{\epsilon}(f(X_1), f(X_2)) \leq 2 \cdot \delta$ when $X_1$ and $X_2$ are
at distance $2$.  Similar problems arise for other properties based on distances
that violate the triangle inequality, such as the Kullback-Leibler (KL) and
$\chi^2$ divergences.

This paper aims to bridge this gap, showing that Fuzz's core can \emph{already}
accommodate other quantitative properties, with $(\epsilon, \delta)$-privacy
being our motivating application. To do so, we first need a semantics for the
probabilistic features of Fuzz. Typically in programming language semantics,
probabilistic programs are structured using the probability
monad~\citep{Moggi89,Giry1982}: the return operation produces a deterministic
distribution that always yields the same value, while the bind operation samples
from a distribution and runs another probabilistic computation. However, monads
cannot describe the composition principles supported by many useful metrics on
probabilities---though the typing rules for distributions in Fuzz resemble the
usual monadic rules~\citep{Moggi89}, there issues related to the context
sensitivities. Accordingly, our \emph{first contribution} is a notion of
\emph{comonadic lifting} of a monad, which lifts the operations of a monad from
a symmetric monoidal closed category (SMCC) to a related \emph{refined}
category. We demonstrate our theory by modeling statistical distance and Fuzz's
max divergence. We also propose a \emph{graded} variant of liftings to encompass
other examples, including the Hellinger distance and the KL and $\chi^2$
divergences.

Our \emph{second contribution} is a \emph{path metric} construction
that reduces relational properties such as
$(\epsilon, \delta)$-differential privacy to equivalent statements
about sensitivity. Concretely, given any reflexive, symmetric relation
$R$ on a set $X$, we define the path metric $d_R$ on $X$ by setting
$d_R(x_1, x_2)$ to be the length of the shortest path connecting $x_1$
and $x_2$ in the graph corresponding to $R$. The path construction
provides a full and faithful functor from the category $\RSRel$ of
reflexive, symmetric relations into the category $\Met$ of metric
spaces and 1-sensitive functions. We also show a right adjoint to the
path construction---as $\Met$ is a symmetric monoidal closed category
and $\RSRel$ is a cartesian closed category, this adjunction recalls
mixed linear and non-linear models of linear logic
\citep{benton1994mixed}.


Putting these two pieces together, our \emph{third contribution} is a model of
the terminating fragment of Fuzz \citep{Reed:2010}. We extend the language with
new types and typing rules to express and reason about relational properties
beyond $\epsilon$-differential privacy, including
$(\epsilon, \delta)$-differential privacy. Our framework can smoothly
incorporate the new features by combining the path construction and graded
liftings, giving a unified perspective on a class of probabilistic relational
properties.

\paragraph*{Outline}
We begin by reviewing the Fuzz language, the interpretation of its deterministic
fragment in the category of metric spaces, and some basic probability theory in
\cref{sec:preliminaries}. \cref{sec:liftings} introduces comonadic liftings on
monads, and uses them to interpret the terminating fragment of Fuzz with
probabilistic constructs.

Shifting gears, \cref{sec:rel} explores how probabilistic properties can be
modeled in the category of relations. \cref{sec:grading} develops a graded
version of our comonadic liftings over relations, to model composition of
properties like $(\epsilon, \delta)$-privacy. In \cref{sec:path} we consider how
to transfer liftings between different categories; our leading example is the
\emph{path construction}, which moves liftings over relations to liftings over
metric spaces. As an application, \cref{sec:grfuzz} extends Fuzz with
graded types capable of modeling $(\epsilon, \delta)$-privacy and other
sensitivity properties of divergences.

In \cref{sec:limitations}, we sketch how our results can be partially extended
to model general recursion in Fuzz, by combining metric CPOs \citep{AGHKC17}
with the probabilistic powerdomain of \citet{JP89}. While the probabilistic
features and liftings pose no problems, the path construction runs into
technical difficulties; we leave this extension as a challenging open
problem. Finally, we survey related work (\cref{sec:rw}) and conclude
(\cref{sec:conclusions}).

\section{Preliminaries}
\label{sec:preliminaries}

To fix notation and terminology, we review here basic concepts of
category theory, probabilities and metric spaces.  For ease of
reference, we include an overview of the deterministic, terminating
fragment of the Fuzz language, recalling its semantics based on metric
spaces~\citep{AGHKC17}.

\subsection{Category Theory}


A map of adjunctions \citep[Sections IV.1,IV.7]{cwm2} from
$\langle L,R,\eta,\epsilon\rangle:\CC\rightharpoonup\DD$ to
$\langle L',R',\eta',\epsilon'\rangle:\CC'\rightharpoonup\DD'$
is a pair of functors $F:\CC\arrow\CC',G:\DD\arrow\DD'$ satisfying
$ G\circ L=L'\circ F, F\circ R=R'\circ G$ and $ F\circ\eta=\eta'\circ F.  $ (The
last equality is equivalent to $G\circ\epsilon=\epsilon'\circ G$.) We write such
a map as $(F,G):(L\dashv R)\arrow (L'\dashv R')$.

Though our main applications revolve around probabilistic programs, we develop
our theory in terms of general monads $\mT = (\T,\eta,\klift{(-)})$ presented as
Kleisli triples.  The operation $\klift{(-)}$, the \emph{Kleisli lifting} of the
monad, promotes a morphism $f : X \arrow \T Y$ to $f^\dagger : \T X \arrow \T Y$
and satisfies common unit and associativity laws.  We assume that $\mT$ is
defined on a symmetric monoidal closed category and carries a compatible
\emph{strength} $\sigma_{X,Y} : X \ox \T Y \arrow \T (X \ox Y)$
\cite{kockstrong}, which allows us to define \emph{parameterized} liftings:
given $f : X \ox Y \arrow \T Z$, we define $\pklift{f} : X \ox \T Y \arrow \T Z$
as the composite
\begin{align}
  \label{eq:parameterized-kleisli}
  \pklift{f} \teq \xymatrix{ X \ox \T Y \ar[r]^\sigma & \T (X \ox Y) \ar[r]^{\klift{f}} & \T Z}.
\end{align}
Lifting the evaluation morphism
$ev_{X,\T Y} : (X \multimap \T Y) \ox X \arrow \T Y$ of the internal
hom $\multimap$ yields the internalized Kleisli lifting
\begin{equation}
  \label{eq:internal-kleisli}
  \kl_{X,Y}^{\mT} : (X \multimap \T Y) \ox \T X \arrow \T Y.
\end{equation}

\subsection{Probability Theory}
\label{sec:probabilities}

Our running example is the monad $\dist X$ of discrete probability
distributions over a set $X$---that is, functions $\mu : X \to [0,1]$
such that $\mu(x) \neq 0$ for at most countably many elements
$x \in X$, and $\sum_{x \in X} \mu(x) = 1$.  For a subset
$U \subseteq X$, we define $\mu(U)$ as $\sum_{x \in U} \mu(x)$.  Given
an element $x \in X$, we write $\eta(x) \in \dist X$ for the point
mass at $x$, i.e., $\eta(x)(x') \teq 1$ if $x = x'$, otherwise 0.
Given $f : X \to \dist Y$, its Kleisli lifting
$f^\dagger : \dist X \to \dist Y$ is defined by sampling a value from
its input distribution and feeding that sample to $f$.  Formally,
\begin{equation*}
  f^\dagger(\mu)(y) \teq \sum_{x\in X}f(x)(y)\mu(x)\quad
  (\mu\in\dist X,y\in Y).
\end{equation*}

\subsection{Metric Spaces}
Let $\R^\infty_{\geq 0}$ be the set of non-negative reals extended
with a greatest element $\infty$.  A \emph{metric} on a set $X$ is a
function $d : X \times X \to \R^{\infty}_{\geq 0}$ that satisfies
(i) $d(x, x) = 0$, (ii) $d(x, y) = d_X(y, x)$, and (iii)
$d(x, z) \leq d(x, y) + d(y, z)$ (the \emph{triangle
    inequality}).
We stipulate that $r \bullet \infty = \infty \bullet r = \infty$ for
any $r \in \R^\infty_{\geq 0}$, where $\bullet$ stands for addition or
multiplication.\mg{I think we should comment on the difference from
  POPL18 since there we claimed that solution was the ``good''
  one. Maybe just anticipate the note from the next page?}
A \emph{metric space} is a pair $X = (|X|, d_X)$, where $|X|$ is a
carrier set and $d_X$ is a metric on $|X|$.\footnote{%
  These conditions technically define an \emph{extended pseudo-metric
    space}---``extended'' because distances may be infinite, and
  ``pseudo'' because distinct elements may be at distance 0---but for
  the sake of brevity we use ``metric space'' throughout.}  We will
often refer to a metric space by its carrier, and we will write $d$
with no subscript when the metric space is clear. A metric
space $X$ is {\em above} a set $I$ if $|X|=I$, that is, $X=(I,d)$ with
some metric $d$ on $I$.

A function $f : X \to Y$ between two metric spaces is
\emph{$r$-sensitive} if $d_Y(f(x_1), f(x_2)) \leq rd_X(x_1, x_2)$ for
all pairs of elements $x_1, x_2 \in X$.  Thus, the smaller $r$ is, the
less the output of a function varies when its input varies.  Note that
this condition is vacuous when $r = \infty$, so any function between
metric spaces is $\infty$-sensitive. When $r = 1$, we speak of a
\emph{non-expansive} function instead.  We write $f:X\nearrow Y$ to
mean that $f$ is a non-expansive function from $X$ to $Y$.

To illustrate these concepts, consider the set of real numbers $\R$
equipped with the Euclidean metric: $d(x, y) = |x - y|$.  The doubling
function that maps the real number $x$ to $2x$ is 2-sensitive; more
generally, a function that scales a real number by another real number
$k$ is $|k|$-sensitive.  However, the squaring function that maps each
$x$ to $x^2$ is not $r$-sensitive for any finite $r$. The identity
function on a metric space is always non-expansive.  The definition of
$\epsilon$-differential privacy, as stated in
\eqref{eq:epsilon-dp-sensitivity}, says that the private query $f$ is
$\epsilon$-sensitive.

\begin{figure}
  \centering
  \begin{tabular}{ccc} \toprule Metric Space & Carrier Set &
    {$d(a, b)$} \\ \midrule
    {$\R$}             & {$\R$}        & {$|a - b|$} \\
    {$1$}              & {$\{\star\}$} & {$0$} \\
    {$r \cdot X$}      & {$X$}         &
                                         {$\begin{cases}
                                             rd_X(a, b)  & : r \neq \infty \\
                                             \infty      & : r = \infty, a \neq b \\
                                             0           & : r = \infty, a =  b
                                           \end{cases}$} \\
    {$X \times Y$}   & {$X \times Y$} & {$\max(d_X(a_1, b_1), d_Y(a_2, b_2))$} \\
    {$X \otimes Y$} & {$X \times Y$} & {$d_X(a_1, b_1) + d_Y(a_2, b_2)$} \\
    %
    %
    {$X + Y$} & {$X \uplus Y$} &
                                 {$\begin{cases}
                                     d_X(a, b) & \text{ if $a, b \in X$} \\
                                     d_Y(a, b) & \text{ if $a, b \in Y$} \\
                                     \infty & \text{ otherwise}
                                   \end{cases}$} \\[0.5em]
    {$X \multimap Y$}
                                             & $X \to Y$
                                                           & {$\sup_{x \in X}d_Y(a(x),b(x))$} \\
    & {Non-exp.} & \\
    \bottomrule
  \end{tabular}
  \caption{Basic constructions on metric spaces}
  \label{fig:metric-spaces}
\end{figure}

\Cref{fig:metric-spaces} summarizes basic constructions on metric
spaces.  Scaling allows us to express sensitivity in terms of
non-expansiveness: an $r$-sensitive function from $X$ to $Y$ is a
non-expansive function from the scaled metric space $r \cdot X$ (or
simply $rX$) to $Y$. The metric on $\infty X$ does not depend on the
metric of $X$---note that we define scaling by $\infty$ separately
from scaling by a finite number to ensure that $d(x,x) = 0$---so we
use this notation even when $X$ is a plain set that does not have a
metric associated with it.  We consider two metrics on products: one
combines metrics by taking their maximum, while the other takes their
sum. The metric on disjoint unions places their two sides infinitely
apart.

Let $\Met$ be the category of metric spaces and non-expansive functions. We
write $p:\Met\arrow\Set$ to denote the forgetful functor defined by $pX=|X|$ and
$pf=f$. Note that $\Set(X, pY) = \Met(\infty\cdot X, Y)$, so $\infty \cdot (-)$
and $p$ form an adjoint pair. The $\times$ metric---used to interpret the
connective $\with$---and $+$ metric yield products and sums in $\Met$, with the
expected projections and injections. Since the two sides of a sum are infinitely
apart, we can define non-expansive functions by case analysis without reasoning
about sensitivity across two different branches. The other metric on products,
given by $\otimes$, is needed to make the operations of currying and function
application compatible with the metric on non-expansive functions defined
above. Formally, $\Met$ forms a symmetric monoidal closed category with monoidal
structure given by $(\otimes, 1)$ and exponentials given by $\multimap$.

\subsection{The Fuzz Language}
Fuzz \citep{Reed:2010} is a type system for analyzing program sensitivity. The
language is a largely standard, call-by-value lambda calculus;
\cref{fig:fuzz-syntax} summarizes the syntax, types, contexts, and context
operations. Types in Fuzz are interpreted as metric spaces, and function types
carry a numeric annotation that describes their sensitivity. The type system
tracks the sensitivity of typed terms with respect to each of its bound
variables, akin to bounded linear logic~\citep{GSS92}. \cref{fig:fuzz-typing}
presents the typing rules of the terminating, deterministic fragment of the
language.

In prior work \citep{AGHKC17}, we described a model where each typing derivation
$x_1 :_{r_1} \tau_1,\ldots,x_n :_{r_n} \tau_n \vdash e : \sigma$ corresponds to
a non-expansive function
$\sem{e} : r_1\sem{\tau_1} \ox \cdots \ox r_n\sem{\tau_n} \to
\sem{\sigma}$,\footnote{%
  Our interpretation of scaling differs slightly from our previous
  one~\citep{AGHKC17} in that distinct points in scaled spaces $\infty X$ are
  infinitely apart.  This does not affect the validity of the interpretation; in
  particular, scaling is still associative, and commutes with $\ox$, $\with$ and
  $+$.} where types are interpreted homomorphically using the constructions on
metric spaces described thus far. In particular, context splitting corresponds
to the family of functions
$\delta : \sem{\Delta+\Gamma}\nearrow\sem\Delta\ox\sem\Gamma$ defined as
$\delta(x) \teq (x_1, x_2)$, where $x_1$ and $x_2$ are obtained by removing the
components of $x$ that do not appear in $\Delta$ and $\Gamma$, respectively.

In addition to the probabilistic features that we will cover next, the original
Fuzz language also includes general recursive types.  These
pose challenges related to non-termination, which we return to in
\cref{sec:limitations}.

\begin{figure}
  \centering
  \begin{align*}
    e \in E & ::= x
              \mid k \in \R
              \mid e_1 + e_2
              \mid ()
              \mid \lambda x.\, e
              \mid e_1\; e_2
              \mid (e_1, e_2)
    \\
            & \mid \blet {(x, y)} = e \bin e'
              \mid \langle e_1, e_2 \rangle
              \mid \pi_i\; e
              \mid {{!}e}
              \mid \blet {{!}x} = e \bin e'
    \\
            & \mid \binl e
              \mid \binr e
              \mid (\bcase e \bof \binl x.\, e_l \mid \binr y.\, e_r)
    \\
    \sigma, \tau & ::=
                   \R
                   \mid 1
                   \mid \sigma \multimap \tau
                   \mid \sigma \otimes \tau
                   \mid \sigma \with \tau
                   \mid \sigma + \tau
                   \mid {{!_r}\sigma}
  \end{align*}
  \hrule
  \[
    r, s \in \R^\infty_{\geq 0} \qquad
    \Gamma, \Delta ::= \varnothing \mid \Gamma, x :_r \sigma
  \]
  \[
      r \cdot \varnothing \teq \varnothing \qquad
      r \cdot (\Gamma, x :_s \sigma) \teq r \cdot \Gamma, x :_{r \cdot s} \sigma
    \]
  \begin{gather*}
      \varnothing + \varnothing \teq \varnothing \qquad
      (\Gamma, x :_r \sigma) + (\Delta, x :_s \sigma) \teq (\Gamma + \Delta), x :_{r + s} \sigma \\
      (\Gamma, x :_r \sigma) + \Delta \teq (\Gamma + \Delta), x :_r \sigma \qquad (x \notin \Delta) \\
      \Gamma + (\Delta, x :_s \sigma) \teq (\Gamma + \Delta), x :_s \sigma \qquad (x \notin \Gamma)
  \end{gather*}
  \caption{Fuzz syntax, types, contexts, and context operations}
  \label{fig:fuzz-syntax}
\end{figure}

\begin{figure}
  \begin{mathpar}
    \dfuzzvar \and \dfuzziunit \and \dfuzzconst \and
    \dfuzziapp \and \dfuzzeapp \and \dfuzzitens \and \dfuzzetens \and
    \dfuzziamp \and \dfuzzeamp \and \dfuzzibang \and \dfuzzebang \and
    \dfuzzisuml \and \dfuzzisumr \and \dfuzzesum 
  \end{mathpar}
  \caption{Fuzz typing rules (deterministic, terminating fragment)}
  \label{fig:fuzz-typing}
\end{figure}

\section{A Semantics for Probabilistic Fuzz}
\label{sec:liftings}

In addition to the deterministic constructs, Fuzz offers a monad-like interface
for probabilistic programming~\citep{Moggi89}, structured around the operations
discussed in \cref{sec:probabilities}---sampling and producing a deterministic
distribution. These operations are typed with the following rules, where
$\Tcons$ is the type constructor of probability distributions:
\begin{gather}
  \label{eq:nongraded-bind}
  \inferrule
  {\Gamma\vdash e_1:\Tcons\tau \\ \Delta,x:_{\text{\colorbox{gray!50}{$\infty$}}}\tau\vdash e_2:\Tcons\sigma}
  {\Delta+\Gamma\vdash \bbind x \leftarrow e_1; e_2:\Tcons\sigma}
  \\
  \label{eq:nongraded-return}
  \inferrule
  {\Gamma\vdash e:\tau}
  {\text{\colorbox{gray!50}{$\infty$}}\Gamma\vdash \breturn e:\Tcons\tau}
\end{gather}
The treatment of sensitivities in these rules is dictated by
differential privacy.  Roughly speaking, sensitivities measure the
privacy loss suffered by each input variable when the result of a
program is released.  Under this reading, the rule for $\bbind$ says
that the privacy loss of an input is the sum of the losses for each
sub-term.  Assuming that the bound variable $x$ has infinite
sensitivity in that rule is tantamount to imposing no restrictions on
its use.  This is possible thanks to the composition properties of
differential privacy: the result of a private computation is
effectively sanitized and can be used \emph{arbitrarily} without further
harming privacy. On the contrary, the scaling factor in the rule for
$\breturn$ implies that we cannot expect any privacy guarantees when
releasing the result of a computation on sensitive data---every input
is marked as having infinite privacy loss.  In practice, results of
differentially private computations must first be obfuscated with random
noise (cf. \cref{sec:interpretation-max-divergence}), and $\breturn$ is only used
with non-private inputs and the results of previous private
computations obtained by $\bbind$.

The above typing rules resemble those of standard monadic constructs, except for
the $\infty$ sensitivities. We could be tempted to combine interpretations for
the two sub-derivations in the $\bbind$ rule using a monad on metric spaces as
follows:
\begin{displaymath}
  \begin{tikzcd}
    \sem{\Delta} \ox \sem{\Gamma}
    \ar{r}[above = 1em]{\lambda\sem{e_2} \ox \sem{e_1}}
    & (\infty\sem{\tau} \multimap \sem{\Tcons\sigma}) \ox \sem{\Tcons\tau}
    \ar{r}[above = 1em]{?}
    & \sem{\Tcons\sigma}.
  \end{tikzcd}
\end{displaymath}
In a typical linear monadic calculus, we could just plug in the
internal Kleisli lifting in the morphism marked with ``?''. Here,
however, the types do not match up---there is the $\infty$ factor.

To model the probabilistic features of Fuzz, we need a structure that is similar
to a monad but with slightly different types for $\breturn$ and $\bbind$.  Our
solution lies in the notion of \emph{parameterized comonadic lifting}, which
refines the operations of a preexisting monad. In addition to the max divergence
originally used in Fuzz, we show how these liftings can be used to model the
statistical distance, which can be seen as measuring $\delta$ in
$(0, \delta)$-differential privacy. Later, we will generalize liftings to handle
$(\epsilon, \delta)$-differential privacy.

\newcommand{\Par}[1]{$#1$-parameterized }
\newcommand{\Weak}{weakly closed }

\subsection{Weakly Closed Monoidal Refinements}

The $\ox$ monoidal structure of $\Met$, which lies at the core of Fuzz's linear
analysis, is derived from the cartesian monoidal structure of $\Set$. These
categories are related by the forgetful functor $p$, which is strict monoidal.
Following \citet{DBLP:conf/popl/MelliesZ15}, we view $p$ as a \emph{refinement}
layering metrics on sets.
\begin{equation}
  \label{eq:metten}
  \xymatrix{
    (\Met,1,\ox) & & \Set \adjunction{ll}{\infty\cdot-}{p}
  }
\end{equation}

The above monoidal structures are closed, but exponentials only match up for
discrete metric spaces $\infty \cdot Z$---that is, $(p,p)$ is a map of
adjunctions of type
\begin{displaymath}
  (-\ox (\infty\cdot Z)\dashv (\infty\cdot Z)\multimap -)\arrow
  (-\times Z\dashv Z\Arrow -).
\end{displaymath}
Parameterized comonadic liftings are based on a generalization of this
situation:
\begin{definition}
  A {\em \Weak monoidal refinement} of a symmetric monoidal closed category
  (SMCC) $(\BB,\unit,\ox,\multimap)$ consists of a symmetric monoidal category
  $(\EE,\dunit,\dox)$ and an adjunction satisfying the following four conditions:
  \begin{equation}
    \label{eq:monref}
    \xymatrix{
      (\EE,\dunit,\dox) & & \BB \adjunction{ll}{L}{p}
    }
  \end{equation}
  \begin{enumerate}
  \item $p$ is strict symmetric monoidal and faithful;
    \label{r1}
  \item the unit of the adjunction is the identity;
    \label{r2}
  \item for each $X\in\BB$, $-\dox LX$ has a right adjoint
    $X\dpitchfork-$;
    \label{r3}
  \item for each $X\in\BB$, $(p,p)$ is a map of adjunction of type
    $(-\ox LX\dashv X\dpitchfork-)\arrow(-\ox X\dashv X\multimap
    -)$. \label{r4}
  \end{enumerate}
\end{definition}

There are many such refinements. Since $- \times (\infty \cdot X)$ is equal to
$- \ox (\infty \cdot X)$, it also has $\infty \cdot X \multimap -$ as a right
adjoint, which yields another example involving $\Met$:
\begin{equation}
  \label{eq:metprod}
  \xymatrix{
    (\Met,1,\times) & & \Set \adjunction{ll}{\infty\cdot-}{p}
  }
\end{equation}

%
We'll see further examples in \cref{sec:rel} when extending Fuzz with
$(\epsilon,\delta)$-differential privacy.

Given a \Weak monoidal refinement as in
\eqref{eq:monref}, we write $!$ for the comonad $L\circ p$.
For $X,Y\in\EE$ and a morphism $f:pX\arrow pY$ in $\BB$, by $f:X\darrow Y$ we
mean that there exists a (necessarily unique) morphism $\dot f:X\arrow Y$ such that
$p\dot f=f$. Since the unit of the adjunction is the identity, we have 1)
$\fa{f\in\BB(X,pY)}f:LX\darrow Y$ and 2) $\fa{f\in\EE(LX,Y)}pf:LX\darrow Y$.

\subsection{Parameterized Comonadic Liftings}

Consider this simplified instance of the Fuzz $\bbind$ rule:
\begin{displaymath}
  \inferrule{
    \inferrule{ }{y:_1\Tcons\tau\vdash y:\Tcons\tau} \\
    \inferrule{\vdots}{\Delta,x:_\infty\tau\vdash e:\Tcons\sigma}
  }
  {\Delta,y:_1\Tcons\tau\vdash \bbind x\leftarrow y; e:\Tcons\sigma}
\end{displaymath}
Recall that $\bbind$ samples $x$ from $y \notin \Delta$ and computes $e$.
In a set-theoretic semantics that ignores sensitivities, $\Tcons$ might
correspond to the monad $D$ of discrete probability distributions on $\Set$, and
we can use the Kleisli lifting of $\sem{e}$ to interpret the entire derivation.
We can refine this interpretation with metrics by lifting $D$---that is, finding
a functor $\dot D$ such that $p \circ \dot D = D \circ p$---provided that the
following implication holds:
\begin{displaymath}
  f:\sem\Delta\ox!\sem\tau\nearrow\dot D\sem\sigma\implies
  \pklift f:\sem\Delta\ox\dot D\sem\tau\nearrow \dot D\sem\sigma
\end{displaymath}
The notion of parameterized comonadic lifting arises by generalizing this
argument to other monads.
\begin{definition}\label{def:inftylifting}
  A {\em \Par\dox $!$-lifting of $\mT$ along a \Weak monoidal
    refinement $p:\EE\arrow\BB$} is a mapping
  $\dot T:|\EE|\arrow|\EE|$ such that 1) $p\dot TX=TpX$, and 2) for
  any $X,Y,Z\in\EE$ and $\CC$-morphism $f$ such that
  $f:Z\dox {!}X\darrow \dot TY$, its parameterized Kleisli lifting
  (see \eqref{eq:parameterized-kleisli}) satisfies
  $\pklift f:Z\dox\dot TX\darrow\dot TY$.
\end{definition}
Every \Par\dox $!$-lifting $\dot T$ satisfies,  for any $X\in\EE$,
\begin{equation}
  \label{eq:unitne}
  \eta_{pX}: {!X} \darrow\dot TX,
\end{equation}
which allows us to extend $\dot T$ to a functor of type $\EE\arrow\EE$.

To model probability distributions in Fuzz, we pick a \Par\ox $!$-lifting
$\dot\dist$ of the distribution monad $\distmonad$.  The choice of $\dot\dist$
can vary---as we will soon see, one such $!$-lifting models the max divergence
in the original Fuzz.  A distribution type $\Tcons\tau$ is mapped to
$\dot\dist\sem{\tau}$, and $\breturn$ is interpreted using \eqref{eq:unitne} by
setting
\begin{align*}
  \sem{\breturn e} & : \sem{\infty \Gamma} \to \sem{\Tcons\tau} \\
  \sem{\breturn e}&\teq \eta_{p\sem\tau}\circ(\infty\cdot \sem e).
\end{align*}
Since $!$ commutes with $\ox$, the first factor is well-typed. We interpret
$\bbind$ as a parameterized Kleisli lifting:
\begin{align*}
  \sem{\bbind x\leftarrow e_1;e_2}&: \sem{\Delta + \Gamma} \to \sem{\Tcons\tau} \\
  \sem{\bbind x\leftarrow e_1;e_2}&\teq \pklift{\sem{e_2}}\circ(\sem{\Delta}\ox\sem{e_1})\circ \delta.
\end{align*}
To illustrate possible variations, the same interpretation works if we choose
$\dot \dist$ to be a \Par\times $!$-lifting of $\distmonad$: since
$Z\ox {!}X=Z\times {!}X$ and because $\id_{pX,pY}:X\ox Y\nearrow X\times Y$ is
non-expansive, any \Par\times lifting is also a \Par\ox lifting.  However, in
this case we can also strengthen the typing rule for $\bbind$ as follows:
\begin{equation}
  \label{eq:nongraded-with}
  \inferrule
  {\Gamma\vdash e_1:\Tcons\tau \\ \Gamma,x:_\infty\tau\vdash e_2:\Tcons\sigma}
  {\Gamma\vdash \bbind x \leftarrow e_1; e_2:\Tcons\sigma},
\end{equation}
where the context $\Gamma$ is shared between $e_1$ and $e_2$, yielding
an additive variant of the Fuzz bind rule (in the sense of linear logic).
%
%
Observe that the domain of $\sem{e_2}$ is defined as
$\sem\Gamma\ox{!}  \sem\tau$, which is equal to $\sem\Gamma\times{!}
\sem\tau$. Thus we can apply the parameterized Kleisli lifting to $\sem{e_2}$.
The interpretation of $\bbind x\leftarrow e_1;e_2$ is then given as the
following composite:
\begin{displaymath}
  \sem{\bbind x\leftarrow e_1;e_2}=
  \xymatrix{
    \sem\Gamma \ar[r]^-{\langle \id,\sem{e_1}\rangle} & \sem\Gamma\times\dot \dist\sem{\tau} \ar[r]^-{\pklift{\sem{e_2}}} & \dot \dist \sem{\sigma}
  }.
\end{displaymath}
We will see examples of $\times$-parameterization in \cref{sec:grfuzz} after we
introduce graded liftings.

\subsection{Constructing Liftings via Parameterized Assignments}


\aa{I also feel that I do not fully grasp the motivation here. It seems that the
  main point is that the metrics on liftings do not actually depend on the
  metrics of their arguments.  Is this the case?}
\sk{
  I think relative monads (also ! lifting) covers the point.
  For me, assignment is a direct abstraction of Barthe-Olmedo
  condition (Theorem 1 of ``Beyond DP'' paper),
  strengthening data processing inequality. Overall,
  I wanted to connect
  various concepts developed around the DP work in a unified framework.
  Sequentially composable family  was also a part of this
  story, but I decided to drop it due to lack of space...
}

To express the max divergence with a comonadic lifting, we appeal to results of
\citet{BartheO13}. Their results are phrased in terms of a notion of
\emph{composability}, which we can recast as a sensitivity property.

\begin{lemma}
  Consider the situation \cref{eq:metten}.  Let
  $\Delta:|\Set|\arrow|\Met|$ be a mapping such that
  $p\Delta X=\dist X$. Then $\Delta$ satisfies the composability
  condition \cite{BartheO13}:
  \begin{displaymath}
    d_{\Delta Y}(\klift f \mu,\klift g \nu)\le d_{\Delta X}(\mu,\nu)+\sup_{x\in
      X}d_{\Delta Y}(f(x),g(x))
  \end{displaymath}
  if and only if the {\em internalized Kleisli lifting} of $\mT$ is a
  non-expansive map
  $\kl_{X,Y}^\distmonad:(\infty\cdot X\multimap\Delta Y)\ox\Delta
  X\nearrow\Delta Y$.
\end{lemma}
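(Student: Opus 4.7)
The plan is to unfold the metric structures on both sides of the claimed equivalence and observe that the composability condition \emph{literally} coincides with the non-expansiveness inequality for $\kl_{X,Y}^{\distmonad}$. The argument is essentially a definitional unwinding, so there is no substantial obstacle once the relevant metrics are spelled out.

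\emph{Step 1: Identify the underlying map.} From the definition of the internalized Kleisli lifting in \eqref{eq:internal-kleisli} as $\pklift{ev}$, the underlying set-function of $\kl_{X,Y}^{\distmonad}$ sends $(f,\mu)$ to $f^{\dagger}(\mu)$. This uses only the strong-monad/Kleisli triple structure of $\distmonad$ together with $p\Delta = \distmonad \circ p$.

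\emph{Step 2: Unfold the source metric.} By the constructions in \cref{fig:metric-spaces}, the space $\infty\cdot X\multimap \Delta Y$ carries the supremum metric $\sup_{x\in X}d_{\Delta Y}(f(x),g(x))$; note in particular that every function $X\to \dist Y$ underlies a non-expansive map out of $\infty\cdot X$, since all distinct points in $\infty\cdot X$ are at distance $\infty$. The tensor $\ox$ then gives the sum metric, so
\begin{equation*}
  d_{(\infty X\multimap \Delta Y)\ox \Delta X}\bigl((f,\mu),(g,\nu)\bigr)
  = \sup_{x\in X}d_{\Delta Y}(f(x),g(x)) + d_{\Delta X}(\mu,\nu).
\end{equation*}

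\emph{Step 3: Match the two inequalities.} Combining Steps 1 and 2, non-expansiveness of $\kl_{X,Y}^{\distmonad}$ is exactly the statement that for all $f,g\in\Set(X,\dist Y)$ and $\mu,\nu\in\dist X$,
\begin{equation*}
  d_{\Delta Y}(f^\dagger\mu,\,g^\dagger\nu)
  \le \sup_{x\in X}d_{\Delta Y}(f(x),g(x)) + d_{\Delta X}(\mu,\nu),
\end{equation*}
which is the composability condition verbatim. Hence the forward direction follows by specialising the non-expansiveness inequality to arbitrary pairs $(f,\mu)$ and $(g,\nu)$, and the converse follows because the composability condition quantifies over \emph{all} such pairs and thus witnesses non-expansiveness of the given function with respect to the computed source/target metrics.

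\emph{Possible subtlety.} The only point deserving care is that $\kl_{X,Y}^{\distmonad}$ is well-defined as a map in $\Set$ (i.e.\ into $\dist Y = p\Delta Y$) irrespective of composability; composability is needed only to promote it to a morphism in $\Met$. Thus the biconditional truly reduces to comparing the two inequalities, and there is no hidden non-triviality beyond Steps 1--3.
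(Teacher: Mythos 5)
Your proof is correct and is exactly the argument the paper intends: the lemma is stated without explicit proof precisely because it is the definitional unwinding you carry out, namely that $\kl^{\distmonad}_{X,Y}(f,\mu)=f^{\dagger}(\mu)$, that the carrier of $\infty\cdot X\multimap\Delta Y$ is all of $\Set(X,\dist Y)$ with the supremum metric, and that the $\ox$ metric turns non-expansiveness into the composability inequality verbatim. Your closing remark that composability is only needed to promote the already well-defined set map to a $\Met$-morphism is also the right way to see why the biconditional is trivial.
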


This equivalent formulation can be readily generalized to other \Weak monoidal
refinements.
\begin{definition}\label{def:assign}
  A {\em \Par\dox assignment of $\EE$ on $\mT$} in a \Weak
  monoidal refinement $p:\EE\arrow\BB$ is a mapping
  $\Delta:|\BB|\arrow|\EE|$ such that $p\Delta X=TX$ and the
  internalized Kleisli lifting $\kl^\mT$ of $\mT$ in
  \cref{eq:internal-kleisli} satisfies
  \begin{equation}
    \label{eq:intkl}
    \kl_{X,Y}^\mT:(X\pitchfork \Delta Y)\dox\Delta X\darrow \Delta Y.
  \end{equation}
\end{definition}

Parameterized assignments turn out to be just an alternative presentation of
parameterized $!$-liftings---the former arising from concepts of probability
theory, and the latter mimicking the Fuzz typing rules. Formally, we have:
\begin{theorem}\label{th:eq}
  Consider a \Weak monoidal refinement \eqref{eq:monref} and a $\ox$-strong
  monad $\mT$ on $\BB$. There is an equivalence of preorders between
  \begin{enumerate}
  \item $\BLift\dox(\mT)$, the subpreorder of $\Ord(p,T\circ p)$
    consisting of \Par\dox $!$-liftings of $\mT$; and
  \item $\Asign\dox(\mT)$, the subpreorder of $\Ord(p,T)$ consisting
    of \Par\dox assignments of $\EE$ on $\mT$,
  \end{enumerate}
  where, given $F:A\arrow|\BB|$, $\Ord(p,F)$ is the class of mappings
  $\{G:A\arrow|\EE| ~|~pGX=FX\}$ ordered by
  $G\le G'\iff\fa{a\in A}\id_{Fa}:Ga\darrow G'a$.
\end{theorem}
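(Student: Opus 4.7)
The plan is to exhibit mutually inverse (up to preorder isomorphism) mappings
$\Phi : \BLift\dox(\mT) \to \Asign\dox(\mT)$ and $\Psi : \Asign\dox(\mT) \to \BLift\dox(\mT)$
given by $\Phi(\dot T)(X) \teq \dot T(LX)$ and $\Psi(\Delta)(X) \teq \Delta(pX)$. The underlying-set conditions are immediate from axiom~(2), which forces $p\circ L = \id_\BB$: we get $p(\dot T LX) = TpLX = TX$ and $p(\Delta pX) = TpX$. Order preservation in both directions is also trivial, since each map only reindexes the argument pointwise and leaves the ``dotted'' identity lifts untouched.

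To verify that $\Phi(\dot T)$ is an assignment, I would transpose the identity on $X\dpitchfork\dot T LY$ along the adjunction $- \dox LX \dashv X \dpitchfork -$ supplied by axiom~(3), obtaining a counit morphism $e : (X\dpitchfork\dot T LY)\dox LX \to \dot T LY$ in $\EE$. By the map-of-adjunctions axiom~(4), $pe$ is the ordinary evaluation $(X\multimap TY)\ox X \to TY$ in $\BB$. Since $!LX = LpLX = LX$, the morphism $e$ has the shape required by \Cref{def:inftylifting}, so applying the $!$-lifting property yields $\pklift{e} : (X\dpitchfork\dot T LY)\dox\dot T LX \darrow \dot T LY$; and by definition $\pklift{(\text{ev})} = \kl^\mT$, which is exactly the required condition~\eqref{eq:intkl}. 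Conversely, to verify that $\Psi(\Delta)$ is a $!$-lifting, given $f : Z\dox !X'\darrow\Delta pY'$ I would transpose the lifted morphism along $- \dox LpX' \dashv pX'\dpitchfork-$ to obtain $\tilde f : Z \to pX'\dpitchfork\Delta pY'$ in $\EE$, and then post-compose $\tilde f\dox\id_{\Delta pX'}$ with the assignment-supplied lift $\dot{\kl}^\mT : (pX'\dpitchfork\Delta pY')\dox\Delta pX' \to \Delta pY'$. Applying $p$ to this composite and using axiom~(4) to identify $p(pX'\dpitchfork\Delta pY') = pX'\multimap TpY'$ together with the standard identity $\kl^\mT\circ(h\ox\id_{TX}) = \pklift{\text{ev}\circ(h\ox\id_X)}$ (from naturality of the strength) yields precisely $\pklift{pf}$, as required.

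For the round trips, $\Phi(\Psi(\Delta))(X) = \Delta(pLX) = \Delta(X)$ holds on the nose. The reverse, $\Psi(\Phi(\dot T))(X) = \dot T(!X)$, requires a preorder isomorphism $\dot T(!X) \cong \dot T(X)$. One direction, $\id : \dot T(!X)\darrow\dot T(X)$, is obtained by applying the functor extension of $\dot T$ (noted just after \Cref{def:inftylifting}) to the counit $\epsilon_X : !X \to X$ of $L\dashv p$; the triangle identity gives $p\epsilon_X = \id$. The reverse inequality $\id : \dot T(X)\darrow\dot T(!X)$ comes from feeding $\eta_{pX} : !(!X)\darrow\dot T(!X)$ from \eqref{eq:unitne} (noting $!(!X) = !X$) into the $!$-lifting property with $Z = \dunit$; the resulting parameterized Kleisli lifting collapses, via the unit--strength coherence, to the left unitor $\dunit\dox\dot T X \cong \dot T X$ in $\BB$, which is the identity modulo the canonical iso.

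The principal obstacle I anticipate sits in the middle paragraph: cleanly identifying the $\EE$-internal counit of $-\dox LX \dashv X\dpitchfork -$ with the $\BB$-evaluation under $p$, and tracing the adjoint-transpose manipulation of $\pklift{(-)}$ through the map-of-adjunctions axiom to recover $\kl^\mT$. These are exactly the identifications that axioms~(1)--(4) were designed to license, so once they are spelled out, the remaining steps reduce to standard monad/strength calculations.
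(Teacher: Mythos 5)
Your proposal is correct and is essentially the paper's own proof: the same mappings $\dot T\mapsto\dot T\circ L$ and $\Delta\mapsto\Delta\circ p$, the same transposition of the evaluation/counit through the map-of-adjunctions axiom to identify $\kl^{\mT}=\pklift{ev}$ with the lifted internal Kleisli morphism, the same strength/monad-law computation showing $\kl^{\mT}\circ(\lambda(f)\dox\id)=\pklift{f}$, and the same unit-based argument ($\klift{\eta_{pX}}=\id_{TpX}$) for the nontrivial round trip $\dot T({!X})\simeq\dot TX$. The only difference is organizational: the paper factors the equivalence through the intermediate notion of parameterized $L$-relative liftings (\cref{def:relativelifting}), proving $\BLift\dox(\mT)\simeq\RLift\dox(\mT)$ and $\RLift\dox(\mT)=\Asign\dox(\mT)$ as two separate steps, whereas you verify the composite of these two steps directly.
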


\subsection{Max Divergence}
\label{sec:interpretation-max-divergence}

\aa{This could be a good place to explain the issue with $\infty$: having
  different points infinitely apart simplifies the rule for return}

Since the max divergence satisfies the composability condition \cite{BartheO13},
\Cref{th:eq} allows us to derive a corresponding parameterized lifting.  In
addition to the basic monadic operations, this lifting supports the real-valued
\emph{Laplace} distribution, a fundamental building block in differential
privacy. We can make a database query differentially private by adding Laplace
noise to its result while calibrating the scale of the noise according to the
query's sensitivity, as measured in terms of a suitable metric on databases. Its
density function is given by
\[
  L(\mu, b)(x) \teq \frac{1}{2b} \exp\left(- \frac{|x - \mu|}{b} \right),
\]
where $\mu \in \R$ and $b > 0$ are parameters controlling the mean and the scale
of the distribution.
The Laplace distribution induces a discrete distribution $\hat{L}(\mu, b) \in
D\R$ by truncating the sample up to some fixed precision (breaking ties
arbitrarily). This new distribution is compatible with the max divergence: it
satisfies the following max divergence bound:
\[
  \dMax_{\R}(\hat{L}(\mu, b), \hat{L}(\mu', b)) \leq \frac{|\mu - \mu'|}{b} .
\]
In other words, the mapping $\mu \mapsto \hat{L}(\mu, b)$ is a
$b^{-1}$-sensitive function from $\R$ to $\dist\R$ equipped with the
max divergence.\footnote{%
  This follows from $\epsilon$-privacy of the Laplace mechanism and stability of
max divergence under post-processing (see, e.g., \citet{DR14}).}

Fuzz exposes the Laplace distribution as a primitive
\[ \lapD[\epsilon] : {!_\epsilon} \R \multimap \Tcons\R , \] originally called
\textit{add\_noise}.  We can interpret this as:
\[
  \sem{\lapD[\epsilon]} \teq \lambda x.\, \hat{L}(x, 1/\epsilon) .
\]
The full Fuzz language also provides a type $\rset \tau$ of finite sets with
elements drawn from $\tau$, used to model sets of private data (``databases'').
This type is equipped with the Hamming distance, which is compatible with the
primitives operations on sets, e.g., computing the size, filtering according to
a predicate, etc. Extending the interpretation of Fuzz accordingly, a function
of type $!_{\epsilon}\rset \tau \multimap \Tcons \R$ corresponds to an
$\epsilon$-sensitive function from databases to distributions over $\R$ equipped
with the max divergence. As we discussed in the Introduction, this sensitivity
property is equivalent to $\epsilon$-differential privacy with respect to the
adjacency relation relating pairs of databases at Hamming distance at most $1$,
i.e., databases differing in at most one record.

\subsection{Statistical Distance}
\label{sec:interpretation-statistical-distance}

\citet{BartheO13} show that the composability condition is also valid for the
statistical distance:
\begin{displaymath}
  \dStat_X(\mu,\nu) \triangleq \frac 12\sum_{i\in X}|\mu(i)-\nu(i)|.
\end{displaymath}
This allows us to extend Fuzz with a new type constructor $\Tcons^{\dStat}$,
which we interpret using the statistical distance and its corresponding
lifting. In addition to $\breturn$ and $\bbind$, we can soundly incorporate a
primitive to compute the Bernoulli distribution, which models a biased coin
flip:
\begin{align*}
  \berD & : \R \multimap \Tcons^{\dStat} \B \\
  \sem{\berD}(p)(\btrue) & \teq \min(\max(p,0),1) \\
  \sem{\berD}(p)(\textbfalse) & \teq \min(\max(1-p,0),1).
\end{align*}
It is straightforward to check that the Bernoulli distribution satisfies the
following statistical distance bound:
\[
  \dStat_\B(\sem{\berD}(p), \sem{\berD}(p')) \leq |p - p'| ,
\]
implying that the type stated above is sound.

\section{Relations and $(\epsilon, \delta)$-Differential Privacy}
\label{sec:rel}

We will now shift gears and consider how to extend Fuzz to handle
$(\epsilon, \delta)$-differential privacy.  Recall that
$(\epsilon, \delta)$-privacy is a \emph{relational} property: a query $f$
satisfies the definition if it maps pairs of related input databases to related
output distributions, for suitable notions of ``relatedness.''  What makes this
notion challenging for Fuzz is that it cannot be phrased directly as a
sensitivity property (except for the special case $\delta = 0$, which we
analyzed above).  Rather than resorting to an entirely different verification
technique, we propose to incorporate relational reasoning into Fuzz by embedding
relations into metric spaces.

To warm up, we first show how to define $(\epsilon, \delta)$-differential
privacy in terms of a category of relations. Later (\cref{sec:grading}), we use
this formulation to capture the composition properties of differential privacy
with graded versions of parameterized liftings. Then, we consider how to
transfer these structures from relations to metric spaces via the \emph{path
  construction} (\cref{sec:path}).  Finally, we extend Fuzz with grading to
support relational properties (\cref{sec:grfuzz}).

\subsection{Differential Privacy in $\RSRel$}
We begin by fixing a category of relations to work in. To smooth the eventual
transfer to metrics, which are reflexive and symmetric, we work with reflexive
and symmetric relations.


\begin{definition}
  \label{def:rel-category}
  The category $\RSRel$ of reflexive, symmetric relations has as
  objects pairs $X = (|X|, {\sim_X})$ of a carrier set $|X|$ and a
  reflexive, symmetric relation ${\sim_X} \subseteq |X| \times
  |X|$. We will often use the carrier set $|X|$ to refer to $X$, and
  write $\sim$ when the underlying space is clear.
  A morphism $X \to Y$ is a function from $X$ to $Y$ that preserves the
  relation: $x \sim x' \implies f(x) \sim f(x')$.
  For $X,Y\in\RSRel$ and $f:|X|\arrow |Y|$, we write $f:X\rearrow Y$
  to mean $f\in\RSRel(X,Y)$.
\end{definition}

The category $\RSRel$ has a terminal object, binary products and exponentials
($X,Y\in\RSRel$):
\begin{align*}
  1 &\teq (1,1\times 1)\\
  X\times Y &\teq (|X|\times |Y|, \{((x,y),(x',y'))~|~x\sim x',y\sim y'\})\\
  X\Arrow Y &\teq (|X|\Arrow |Y|,\{(f,f')~|~\forall{x\sim x'}.~f(x)\sim f'(x')\}).
\end{align*}
hence $\RSRel$ is a CCC.\footnote{%
  We note that $\RSRel$ has another symmetric monoidal closed structure with
tensor product $X\ox Y=(|X|\times |Y|,\{((x,y),(x,y'))~|~y\sim y'\}\cup
\{((x,y),(x',y))~|~x\sim x'\})$. The functor $q$ is also a \Weak monoidal
refinement of type $(\RSRel,1,\ox)\arrow\Set$.}
The forgetful functor $q:\RSRel\arrow\Set$, defined by $qX=|X|$ and
$qf=f$, has a left adjoint $M :\Set\arrow\RSRel$ endowing a set $X$
with the diagonal relation.  Moreover, $q$ strictly preserves the
cartesian closed structure, hence is a \Weak monoidal refinement of
the CCC $\Set$.
\begin{align}
  \label{eq:rsrelprod}
  \xymatrix{
    (\RSRel,1,\times) & & \Set \adjunction{ll}{M}{q}
  }
\end{align}
%
The definition of differential privacy is parameterized by a set $db$ of
\emph{databases}, along with a binary \emph{adjacency} relation
$adj \subseteq db \times db$, which we assume to be symmetric and reflexive.
Conventional choices for $db$ include the set of sets of (or multisets, or
lists) of records from some universe of possible data, while $adj$ could relate
pairs of databases at symmetric difference at most $1$.  We recall the original
definition here for convenience.

\aa{This is a bit repetitive: we already talked about a general adjacency
  relation when discussing $\epsilon$-privacy.}

\begin{definition}[\citet{dwork2006our}]\label{def:indist}
  Let $\epsilon, \delta \in [0, \infty)$.  A randomized computation
  $f : db \to \dist X$ is $(\epsilon, \delta)$-\emph{differentially
    private} if for all pairs of adjacent databases $(d, d') \in adj$
  and subsets of outputs $S \subseteq X$, we have:
  \begin{align*}
    &f(d)(S) \leq \exp(\epsilon) \cdot f(d')(S) + \delta\quad\text{and}\\
    & f(d')(S) \leq \exp(\epsilon) \cdot f(d)(S) + \delta .
  \end{align*}
\end{definition}

We can track the privacy parameters by attaching the following {\em
  indistinguishability relation} to the \emph{codomain} of a
differentially private algorithm.  Given
$\epsilon,\delta\in[0,\infty)$ and a set $X$, we define
$\rDP(\epsilon, \delta)(X) \in \RSRel$ by setting
\begin{align*}
  \rDP(\epsilon,\delta)(X)\teq(\dist X,\{&(\mu,\nu) \mid \fa{S
                                           \subseteq X}\\
                                         &(\mu(S) \leq \exp(\epsilon) \cdot \nu(S) + \delta)
                                           \wedge\\
                                         &(\nu(S) \leq \exp(\epsilon) \cdot \mu(S) + \delta)\}) .
\end{align*}
\begin{proposition} \label{prop:dp-rel} A function
  $f : db \arrow \dist X$ is $(\epsilon,\delta)$-differentially
  private if and only if
  $f : (db, adj) \rearrow \rDP(\epsilon,\delta)(X)$.
\end{proposition}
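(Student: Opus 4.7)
The plan is to prove this by directly unfolding the definitions on both sides and observing that they coincide. The statement is essentially a repackaging of Definition~\ref{def:indist} into the language of $\RSRel$, so the proof will be largely calculational rather than conceptual.

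First, I would spell out what it means for $f : (db, adj) \rearrow \rDP(\epsilon, \delta)(X)$ to be a morphism in $\RSRel$. By \cref{def:rel-category}, this amounts to saying that $f$ is a function $|db| \to |\dist X|$ such that whenever $d \sim_{adj} d'$, i.e., $(d, d') \in adj$, we have $f(d) \sim_{\rDP(\epsilon,\delta)(X)} f(d')$. Next, I would unfold the relation $\sim_{\rDP(\epsilon,\delta)(X)}$ using its definition: $\mu \sim \nu$ iff for every $S \subseteq X$, both $\mu(S) \leq \exp(\epsilon)\cdot \nu(S) + \delta$ and $\nu(S) \leq \exp(\epsilon)\cdot \mu(S) + \delta$.

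Combining these two unfoldings, the condition $f : (db, adj) \rearrow \rDP(\epsilon,\delta)(X)$ becomes: for all $(d, d') \in adj$ and all $S \subseteq X$,
\begin{align*}
  & f(d)(S) \leq \exp(\epsilon) \cdot f(d')(S) + \delta, \\
  & f(d')(S) \leq \exp(\epsilon) \cdot f(d)(S) + \delta,
\end{align*}
which is literally the statement of Definition~\ref{def:indist}. Both directions of the ``iff'' then follow immediately: the forward direction reads off the two inequalities from membership in $\sim_{\rDP(\epsilon,\delta)(X)}$, and the converse packages the two inequalities back into the relation. No nontrivial step is involved; the relation $\rDP(\epsilon,\delta)$ was designed precisely to internalize $(\epsilon,\delta)$-privacy as preservation of a reflexive symmetric relation.

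The only subtlety worth flagging—so minor that it barely qualifies as an obstacle—is checking that $\rDP(\epsilon,\delta)(X)$ is indeed a well-formed object of $\RSRel$, i.e., that the defining relation is reflexive and symmetric. Reflexivity holds because $\mu(S) \leq \mu(S) + \delta \leq \exp(\epsilon)\mu(S) + \delta$ (since $\epsilon \geq 0$ and $\mu(S) \geq 0$), and symmetry is built into the definition by taking the conjunction of the two inequalities. Once this sanity check is noted, the equivalence is immediate.
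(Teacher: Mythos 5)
Your proof is correct and is exactly the intended argument: the paper states this proposition without proof precisely because $\rDP(\epsilon,\delta)(X)$ was defined so that relation-preservation under $adj$ unfolds literally to Definition~\ref{def:indist}, which is what you verify. Your additional sanity check that $\rDP(\epsilon,\delta)(X)$ is reflexive and symmetric (hence a well-formed object of $\RSRel$) is a nice touch, and also correct.
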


Like $\epsilon$-differential privacy,
$(\epsilon, \delta)$-differential privacy behaves well under {\em
  sequential composition}.
\begin{theorem}[\citet{DMNS06}] \label{thm:dp-seq-comp} Let
  $f : db \to \dist X$ and $g : db \times X \to \dist Y$ be such that
  1) $f$ is $(\epsilon, \delta)$-differentially private, and 2)
  $g(-, x) : db \to \dist Y$ is $(\epsilon', \delta')$-differentially
  private for every $x \in X$.
  Then the composite function $d \mapsto g^\ddagger(d, f(d))$ is
  $(\epsilon + \epsilon', \delta + \delta')$-differentially private.
\end{theorem}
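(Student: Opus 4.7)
The plan is to reformulate the statement in $\RSRel$ via \cref{prop:dp-rel} and then verify it by a direct hybrid computation on distributions. By \cref{prop:dp-rel}, the hypothesis that $f$ is $(\epsilon,\delta)$-differentially private is equivalent to the morphism condition $f : (db, adj) \rearrow \rDP(\epsilon,\delta)(X)$, while the assumption on $g$ is equivalent to $g(-,x) : (db, adj) \rearrow \rDP(\epsilon',\delta')(Y)$ for every $x \in X$. The goal becomes to show that the composite $h \teq d \mapsto \pklift{g}(d, f(d))$ is a morphism $(db,adj) \rearrow \rDP(\epsilon+\epsilon',\delta+\delta')(Y)$.

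Fix $d_1 \sim_{adj} d_2$ and a subset $S \subseteq Y$. The strategy is to introduce the intermediate distribution $\mu_{\mathrm{mid}}(y) \teq \sum_{x} f(d_1)(x)\, g(d_2,x)(y)$ and bound $h(d_1)(S)$ by way of $\mu_{\mathrm{mid}}(S)$. Applying the privacy of $g$ pointwise gives $g(d_1,x)(S) \leq e^{\epsilon'} g(d_2,x)(S) + \delta'$ for each $x$, and averaging under $f(d_1)$ yields $h(d_1)(S) \leq e^{\epsilon'} \mu_{\mathrm{mid}}(S) + \delta'$. The remaining inequality requires applying the privacy of $f$ to the $[0,1]$-valued observable $\phi(x) \teq g(d_2,x)(S)$ in order to pass from $\mu_{\mathrm{mid}}(S)$ to $h(d_2)(S)$.

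The key auxiliary fact is that indistinguishability in $\rDP(\epsilon,\delta)$ extends from indicators of sets to arbitrary $[0,1]$-valued observables: if $\mu \sim \nu$ in $\rDP(\epsilon,\delta)(X)$, then $\E_\mu[\phi] \leq e^\epsilon \E_\nu[\phi] + \delta$ for every $\phi : X \to [0,1]$. I would prove this by the layer-cake identity $\E_\mu[\phi] = \int_0^1 \mu\{x : \phi(x) > t\}\, dt$ and integrating the pointwise $\rDP$ bound against $t$. The symmetric inequality bounding $h(d_2)(S)$ in terms of $h(d_1)(S)$ is handled analogously, using symmetry of both $\rDP$ relations.

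The hardest part is accounting for the $\delta$ terms sharply enough to obtain the additive bound $\delta + \delta'$ rather than the loose $e^{\epsilon'}\delta + \delta'$ that a naive two-step hybrid produces. The standard remedy is to work with the joint distributions $\alpha_i(x,y) \teq f(d_i)(x)\, g(d_i,x)(y)$ on $X \times Y$, establishing $\alpha_1 \sim_{\epsilon+\epsilon',\delta+\delta'} \alpha_2$ and then invoking post-processing stability of $\rDP$ to conclude. This is most cleanly done through a coupling characterization of $\rDP(\epsilon,\delta)$, which represents indistinguishability via a ``bad'' event of mass at most $\delta$ on which the multiplicative bound may fail; the bad events for $f$ and for $g(-,x)$ then combine by a union bound into a single bad event of mass $\delta + \delta'$ in the joint space, giving the advertised parameters.
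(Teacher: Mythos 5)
First, a point of comparison: the paper does not prove this theorem at all --- it is imported from the differential privacy literature (attributed to \citet{DMNS06}) and used as a black box to derive \cref{pp:rdpcomp} --- so your attempt has to stand on its own. Its first two steps do: the hybrid through $\mu_{\mathrm{mid}}$ and the layer-cake lemma (indistinguishability extends from indicators to $[0,1]$-valued observables) are both correct, and you rightly diagnose that this route only yields $(\epsilon+\epsilon',\, e^{\epsilon'}\delta+\delta')$. The genuine gap is in the one step that is supposed to deliver the sharp additive bound: the claimed characterization of $\rDP(\epsilon,\delta)$ via a ``bad'' event of mass at most $\delta$ outside of which the multiplicative bound holds. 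That property is the definition of \emph{probabilistic} (pointwise) differential privacy, which is strictly stronger than $(\epsilon,\delta)$-indistinguishability, and the direction you need --- extracting such bad events from the hypotheses on $f$ and $g(-,x)$ --- fails. Concretely, let $\nu$ be uniform on $\{1,\dots,n\}$ and let $\mu$ put mass $(1+2\delta)/n$ on each point of the first half and $(1-2\delta)/n$ on each point of the second half: then $\mu$ and $\nu$ are $(0,\delta)$-indistinguishable, yet any event outside of which $\mu(i)\le\nu(i)$ pointwise must contain the entire first half and so has $\mu$-mass greater than $1/2$. In general the slack $\delta$ can be spread as a thin multiplicative excess over a set of large probability, so no small bad event exists, and the union-bound step of your plan has nothing to apply to.

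The repair is a different decomposition lemma, which is the actual crux of the standard proof (see \citet{DRV10}): if $\mu(S)\le e^{\epsilon}\nu(S)+\delta$ for all $S$, then there is a probability distribution $\tilde\mu$ with $\dStat(\mu,\tilde\mu)\le\delta$ and $\tilde\mu(x)\le e^{\epsilon}\nu(x)$ \emph{pointwise} --- shave off the excess $\max(\mu(x)-e^{\epsilon}\nu(x),0)$, whose total is exactly the skew divergence and hence at most $\delta$, and redeposit it anywhere under the ceiling $e^{\epsilon}\nu$, which has total mass $e^{\epsilon}\ge 1$. With this lemma in place of the bad-event claim, your joint-distribution plan goes through essentially verbatim: shave $f(d_1)$ against $f(d_2)$ to get $\tilde\mu_1$, shave each $g(d_1,x)$ against $g(d_2,x)$ to get $\tilde\gamma_x$, form $\tilde\alpha(x,y)\teq\tilde\mu_1(x)\,\tilde\gamma_x(y)$, observe $\tilde\alpha\le e^{\epsilon+\epsilon'}\alpha_2$ pointwise, and bound $\dStat(\alpha_1,\tilde\alpha)\le\delta+\delta'$ by the triangle inequality for statistical distance (this is what replaces your union bound). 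Marginalizing to subsets of the form $X\times S$ gives $h(d_1)(S)\le e^{\epsilon+\epsilon'}h(d_2)(S)+\delta+\delta'$, and the reverse inequality follows by running the same argument with the roles of $d_1$ and $d_2$ exchanged.
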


\section{Graded $!$-Liftings}
\label{sec:grading}

When reasoning about $\epsilon$-differential privacy in $\Met$ (and in Fuzz),
the privacy parameter is reflected in the scale of the \emph{domain} of a
non-expansive map, and the composition principle of $(\epsilon, 0)$-privacy
corresponds to composition of non-expansive maps. In $\RSRel$, there is no
analogous scaling operation for reasoning about $(\epsilon,\delta)$-privacy.  To
track these parameters through composition, we will instead use the query
\emph{codomain} by extending the liftings of \cref{sec:liftings} with
\emph{monoid grading} (analogously to graded extensions of monads
\citep{smirnov,mellies:BohmFestschrift,DBLP:conf/popl/Katsumata14}).  The
generality of these graded counterparts, which abstract the composition
properties of \cref{thm:dp-seq-comp}, will prove useful later on
(\cref{sec:other-divergences}), when modeling the composition behavior of other
relational properties.

We assume a \Weak monoidal refinement of a SMCC $(\BB,\unit,\ox,\multimap)$, as
in \eqref{eq:monref}, fixing a $\ox$-strong monad over $\BB$, and a preordered
monoid $(M,\le,1,\cdot)$.
\begin{definition}
  An {\em $M$-graded \Par\dox $!$-lifting} is a monotone function
  $\dot T:(M,\le)\arrow\Ord(p,T\circ p)$ (cf. \cref{th:eq}) such that the
  parameterized Kleisli lifting \cref{eq:parameterized-kleisli} of $\mT$
  satisfies
  \begin{displaymath}
    f:Z\dox {!}X\darrow \dot T\alpha Y\implies \pklift f:Z\dox\dot T\beta X\darrow\dot T(\beta\cdot \alpha)Y.
  \end{displaymath}
\end{definition}
When $M=1$, this definition reduces to its non-graded counterpart. Monotonicity
of an $M$-graded \Par\dox $!$-lifting $\dot T$ means that for any monoid element
$\alpha\le\beta$ and any object $X\in\EE$, we have
$\dot T\alpha X\le\dot T\beta X$. Regarding the unit,
$\eta_{pX}:{!}X\darrow\dot T\alpha X$ holds for any $\alpha\in M$ and $X\in\EE$.
From this, each $\dot T\alpha$ extends to an endofunctor over $\EE$.
\begin{definition}
  An {\em $M$-graded \Par\dox assignment of $\EE$ on $\mT$} is a
  monotone function $\Delta:(M,\le)\arrow\Ord(p,T)$ such that the
  internalized Kleisli lifting morphism $\kl^\mT$ of $\mT$ in
  \cref{eq:internal-kleisli} satisfies
  \begin{equation}
    \kl_{X,Y}^\mT:(X\pitchfork \Delta \alpha Y)\dox\Delta \beta X\darrow \Delta (\beta\cdot \alpha)Y.
  \end{equation}
\end{definition}
Once again, the original definition corresponds to the case $M=1$. To illustrate
this notion, in the \Weak monoidal refinement \eqref{eq:metten}, an
$M$-graded \Par\ox assignment $\Delta$ of $\Met$ on $\distmonad$ consists of a
family of metrics $d_{\Delta\alpha X}$ on $\dist X$, indexed by $\alpha\in M$,
such that, for any $X,Y\in\Set,\mu,\nu\in \dist X,f,g:X\arrow \dist Y$ and
$\alpha,\beta\in M$, we have
\begin{displaymath}
  d_{\Delta(\alpha\cdot\beta)Y}(f^\dagger \mu,g^\dagger \nu)\le d_{\Delta\alpha X}(\mu,\nu)+\sup_{x\in X}d_{\Delta\beta Y}(f(x),g(x)).
\end{displaymath}
In this case, assignments encode a family of distances on distributions enjoying
the sequential composition theorem for statistical divergences proposed by
\citet[Theorem 1]{BartheO13} (see also \citet{OlmedoThesis}).

An instance of a graded assignment is the indistinguishability
relation for differential privacy. The following is a consequece of
\cref{thm:dp-seq-comp}:
\begin{proposition}\label{pp:rdpcomp}
  Let $\R_{\ge 0}^{+}$ be the additive monoid on the set $\R_{\ge 0}$
  of nonnegative real numbers. In the \Weak monoidal refinement
  \eqref{eq:rsrelprod}, the indistinguishability relation, regarded as
  a mapping of type $\rDP:\R_{\ge 0}\times \R_{\ge 0}\arrow |\RSRel|$,
  is an $(\R_{\ge 0}^{+}\times\R_{\ge 0}^{+})$-graded \Par\times
  assignment of $\RSRel$ on $\distmonad$.
\end{proposition}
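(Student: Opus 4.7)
The proposition unpacks into three obligations: (i) $q\rDP(\epsilon,\delta)(X) = \distmonad X$ and $\rDP(\epsilon,\delta)(X) \in \RSRel$; (ii) monotonicity of $\rDP$ in $(\epsilon,\delta)$ along the product order on $\R_{\ge 0}\times\R_{\ge 0}$; and (iii) the internalized-Kleisli condition
\[\kl^\distmonad_{X,Y} : (X \pitchfork \rDP(\epsilon',\delta')(Y)) \times \rDP(\epsilon,\delta)(X) \rearrow \rDP(\epsilon+\epsilon',\delta+\delta')(Y).\]
Item (i) is immediate from the definition of $\rDP$; reflexivity follows because $\exp(\epsilon)\ge 1$, and symmetry is built in. Item (ii) reduces to the observation that loosening the privacy parameters weakens the inequalities defining the relation, so the identity on $\distmonad X$ becomes a morphism in $\RSRel$.

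The bulk of the proof is (iii). First I would unpack $X\pitchfork \rDP(\epsilon',\delta')(Y)$ in the refinement \eqref{eq:rsrelprod}: by conditions (r3)–(r4) applied to $q$ with left adjoint $M$ (the diagonal-relation functor), this object is the $\RSRel$-exponential $MX\Arrow \rDP(\epsilon',\delta')(Y)$, whose carrier is $X \to \distmonad Y$ and whose relation is pointwise $(\epsilon',\delta')$-indistinguishability (since $x\sim_{MX}x'$ iff $x=x'$). Under these identifications, the internalized Kleisli lifting is simply $(f,\mu)\mapsto f^\dagger(\mu)$, and the statement to prove becomes: whenever $f,g:X\to\distmonad Y$ are pointwise $(\epsilon',\delta')$-indistinguishable and $\mu,\nu\in\distmonad X$ are $(\epsilon,\delta)$-indistinguishable, then $f^\dagger(\mu)$ and $g^\dagger(\nu)$ are $(\epsilon+\epsilon',\delta+\delta')$-indistinguishable.

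I would then reduce this to \cref{thm:dp-seq-comp} (DMNS06) by packaging $\mu,\nu$ and $f,g$ into a two-point adjacency structure: take $db = \{\star_1,\star_2\}$ with $\star_1\sim\star_2$, define $\hat f:db\to\distmonad X$ by $\hat f(\star_1)=\mu$, $\hat f(\star_2)=\nu$, and define $\hat g:db\times X\to\distmonad Y$ by $\hat g(\star_i,x)\in\{f(x),g(x)\}$ according to $i$. The hypotheses on $\mu,\nu$ and on $f,g$ translate precisely into $\hat f$ being $(\epsilon,\delta)$-DP and each $\hat g(-,x)$ being $(\epsilon',\delta')$-DP, so \cref{thm:dp-seq-comp} gives that $d\mapsto \hat g^\ddagger(d,\hat f(d))$ is $(\epsilon+\epsilon',\delta+\delta')$-DP; evaluating at $\star_1$ and $\star_2$ yields exactly $f^\dagger(\mu)$ and $g^\dagger(\nu)$, which are therefore $(\epsilon+\epsilon',\delta+\delta')$-indistinguishable by \cref{prop:dp-rel}.

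I expect the main obstacle to be purely bookkeeping: correctly identifying $X\pitchfork(-)$ in $\RSRel$ and checking that the discreteness of $MX$ (so that the pointwise condition on $f,g$ is what we want) matches the required pointwise hypothesis of \cref{thm:dp-seq-comp}. Once that is in place, the composition inequality is a direct corollary of the sequential composition theorem, with grading handled by the additive monoid structure of $\R_{\ge 0}^+\times\R_{\ge 0}^+$.
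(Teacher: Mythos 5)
Your proof is correct, and it verifies exactly the right obligations --- in particular, your identification of $X\pitchfork-$ in \eqref{eq:rsrelprod} as $MX\Arrow-$ with the \emph{pointwise} indistinguishability relation, and of $\kl^{\distmonad}_{X,Y}$ as $(f,\mu)\mapsto f^\dagger(\mu)$, is what makes the whole thing go through. However, you bridge from \cref{thm:dp-seq-comp} to the assignment condition by a different route than the paper. The paper's appendix proof is abstract: it introduces the notion of an $M$-graded \emph{sequentially composable family} of $\EE$-objects, $\Comp(\mT,M)$ (\cref{def:compfam}), proves once and for all that $\Asign\dtimes(\mT,M)=\Comp(\mT,M)$ for any cartesian refinement and strong monad --- the direction needed here instantiates the composability condition at the single ``generic'' object $Z=(X\dpitchfork\Delta\beta Y)\dtimes\Delta\alpha X$ with $f=\pi_1$ and $g$ (essentially) the evaluation map --- and then observes that \cref{thm:dp-seq-comp}, read over arbitrary objects $(db,adj)$ of $\RSRel$, is literally the statement $\rDP\in\Comp(\distmonad,\R_{\ge 0}^{+}\times\R_{\ge 0}^{+})$. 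You instead instantiate \cref{thm:dp-seq-comp} at a \emph{minimal} two-point database, one instance per pair of related elements $(f,\mu)\sim(g,\nu)$, rather than once at the universal one; the only detail worth making explicit is that your two-point adjacency must include the reflexive pairs (required by \cref{def:indist}), which is harmless since the hypotheses on reflexive pairs follow from the reflexivity of $\rDP$ that you established in item (i). Both arguments lean on the same external result; yours is shorter, elementary, and self-contained, but specific to $\RSRel$ and $\distmonad$, whereas the paper's equivalence is reusable for any strong monad and cartesian refinement, yields the converse direction as well (every graded assignment is a sequentially composable family), and is what ties the categorical notion of assignment to the Barthe--Olmedo-style composability conditions invoked elsewhere in the paper.
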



We have a graded analogue of \cref{th:eq}.
%
\begin{theorem}\label{th:grdeq}
  Consider a \Weak monoidal refinement \eqref{eq:monref} and a $\ox$-strong
  monad $\mT$ on $\BB$.  Let $M$ be a preordered monoid. The following preorders
  are equivalent:
  \begin{enumerate}
  \item $\BLift\dox(\mT,M)$, the preordered class of
  $M$-graded \Par\dox $!$-liftings of $\mT$ with the pointwise preorder, and
\item $\Asign\dox(\mT,M)$, the preordered class of
  $M$-graded \Par\dox assignments on $\mT$ with the pointwise preorder.
  \end{enumerate}
\end{theorem}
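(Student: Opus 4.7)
The plan is to extend the bijection of Theorem~\ref{th:eq} pointwise over the grading monoid $M$. I will define monotone maps
\[
\Phi : \BLift\dox(\mT,M) \rightleftarrows \Asign\dox(\mT,M) : \Psi
\]
by $\Phi(\dot T)(\alpha)(X) \teq \dot T(\alpha)(LX)$ for $X \in \BB$ and $\Psi(\Delta)(\alpha)(X) \teq \Delta(\alpha)(pX)$ for $X \in \EE$. The carrier conditions $p\Phi(\dot T)(\alpha)(X) = TX$ and $p\Psi(\Delta)(\alpha)(X) = TpX$ follow from $pLX = X$ (condition \ref{r2}) and from $p\dot T Y = TpY$. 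Monotonicity in $\alpha$ and in the pointwise preorder on graded families is inherited directly from the inputs, since the class of $\darrow$-arrows is closed under all the constructions used.

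The central verification is that each translation satisfies the appropriate Kleisli condition, correctly tracking the monoid product $\beta \cdot \alpha$. The two conditions are interderivable via the map-of-adjunctions property \ref{r4}, which identifies $p(W \dpitchfork -)$ with $W \multimap p(-)$. For $\Psi$, given $f : Z \dox {!}W \darrow \Psi(\Delta)(\alpha)(Y) = \Delta(\alpha)(pY)$, I transpose $f$ along $- \dox LpW \dashv pW \dpitchfork -$ to obtain $\check f : Z \darrow pW \dpitchfork \Delta(\alpha)Y$, then tensor by $\Delta(\beta)(pW) = \Psi(\Delta)(\beta)W$ and postcompose with the internalized Kleisli map at grade $\beta \cdot \alpha$ supplied by the assignment hypothesis. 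A diagram chase using \eqref{eq:parameterized-kleisli} and \eqref{eq:internal-kleisli}, together with naturality of the strength and the definition of currying, identifies the underlying $\BB$-morphism of this composite with $\pklift{pf}$, yielding $\pklift f : Z \dox \Psi(\Delta)(\beta)W \darrow \Psi(\Delta)(\beta \cdot \alpha)Y$. The direction $\Phi$ is dual: instantiating the $!$-lifting condition of $\dot T$ at $Z = W \dpitchfork \dot T(\alpha)(LY)$, $X = LW$, and $f = \eval$, the $\EE$-internal evaluation, produces the internalized Kleisli arrow $\kl^\mT : (W \dpitchfork \Phi(\dot T)(\alpha)Y) \dox \Phi(\dot T)(\beta)W \darrow \Phi(\dot T)(\beta \cdot \alpha)Y$, because the underlying $\BB$-morphism of $\pklift{\eval}$ is by definition $\kl^\mT$.

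For the equivalence of preorders, $\Phi \circ \Psi$ is strictly the identity: $\Phi(\Psi(\Delta))(\alpha)(X) = \Delta(\alpha)(pLX) = \Delta(\alpha)(X)$. For $\Psi \circ \Phi$, I have $\Psi(\Phi(\dot T))(\alpha)(X) = \dot T(\alpha)({!}X)$, and must show $\dot T(\alpha)({!}X) \simeq \dot T(\alpha)(X)$ pointwise. The direction $\dot T(\alpha)({!}X) \le \dot T(\alpha)(X)$ uses functoriality of $\dot T(\alpha)$ (from \eqref{eq:unitne}) applied to the counit $\epsilon_X : {!}X \to X$, whose underlying $\BB$-map is $\id_{pX}$ by the triangle identities and $\eta = \id$. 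For the reverse, apply the parameterized Kleisli lifting condition to $\eta_{pX} : {!}X \darrow \dot T(1)({!}X)$ (noting $p{!}X = pX$, available by \eqref{eq:unitne} with grade $1$): the resulting $\pklift{\eta_{pX}} : \dot T(\alpha)(X) \darrow \dot T(\alpha)({!}X)$ has underlying map $\eta_{pX}^\dagger = \id_{TpX}$ by the monad unit law, providing the required inverse $\darrow$-arrow. The principal obstacle is the Kleisli correspondence in the second paragraph, which requires carefully aligning currying in $\BB$ with the adjoint transpose in $\EE$ via condition \ref{r4}; the grading itself introduces no new difficulty beyond systematically carrying $\beta \cdot \alpha$ through the constructions.
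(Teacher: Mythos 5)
Your proposal is correct and is essentially the paper's own argument: the same two translations $\dot T\mapsto\dot T(-)\circ L$ and $\Delta\mapsto\Delta(-)\circ p$, the same use of condition \ref{r4} to move between the $\dpitchfork$-transpose in $\EE$ and currying in $\BB$ (with the key identity $\pklift{ev}\circ(\lambda(f)\dox\id)=\pklift{f}$, which the paper computes explicitly via the strength and monad laws), and the same $\klift{\eta_{pX}}=\id_{TpX}$ trick to establish $\dot T\alpha({!}X)\simeq\dot T\alpha(X)$ for the preorder equivalence. The only organizational difference is that the paper factors the equivalence through the intermediate notion of parameterized $L$-relative liftings (proving $\BLift\dox(\mT)\simeq\RLift\dox(\mT)=\Asign\dox(\mT)$ for the non-graded case and noting the graded case is similar), whereas you compose the two steps directly and carry the grading $\beta\cdot\alpha$ throughout.
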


\section{Transfers of Assignments}
\label{sec:path}


So far, we are able to model relational properties and their composition
behavior in $\RSRel$. In this section, we will show how to carry out this
reasoning in a different category, namely $\Met$. We introduce this idea
abstractly, then give a concrete example called the \emph{path construction} for
the special case of $\RSRel$ and $\Met$.  In our framework, this transfer of
structure is induced by a morphism between \Weak monoidal refinements.
\begin{definition}
  Consider two \Weak monoidal refinements of a
  SMCC $\BB$, and a functor $F:\EE\arrow\FF$:
  \begin{equation}
    \label{eq:mapref}
    \xymatrix{
      (\EE,\dunit,\dox) \ar[rr]^-{F} & & (\FF,\ddunit,\ddox) \\
      & \BB \adjunction{lu}{L}{p} \adjunction{ru}{L'}{p'}
    }
  \end{equation}
  $F$ is a {\em morphism of \Weak monoidal refinements} if
  \begin{enumerate}
  \item $F$ is strict symmetric monoidal,
  \item $(\mathrm{Id}_\BB,F):(L\dashv p)\arrow(L'\dashv p')$, and
  \item
    $(F,F):(-\dox LX\dashv X\dpitchfork-)\arrow(-\ddox L'X\dashv
    X\ddpitchfork-)$ for each $X\in\BB$.
  \end{enumerate}
  We write
  $F:(\EE,\dunit,\dox,L,p)\arrow_\BB (\FF,\ddunit,\ddox,L',p')$.
\end{definition}

\begin{theorem}\label{th:mapasign}
  If $F:\EE\arrow\FF$ is a morphism of \Weak monoidal refinements, then
  $F\circ -$ restricts to a monotone function of type
  $\Asign\dox(\mT,M)\arrow\Asign{\ddox}(\mT,M)$.
\end{theorem}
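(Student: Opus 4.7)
The plan is to verify that $F\circ\Delta$ satisfies the two conditions defining an element of $\Asign{\ddox}(\mT, M)$ whenever $\Delta\in\Asign\dox(\mT, M)$, and then to verify monotonicity. The entire argument is a short diagram chase powered by the three compatibility equalities that define a morphism of weakly closed monoidal refinements: strict monoidality $F(-\dox-) = F(-)\ddox F(-)$, the equality $p' \circ F = p$ coming from $(\mathrm{Id}_\BB, F):(L\dashv p)\arrow(L'\dashv p')$, and the equality $F(X \dpitchfork Y) = X \ddpitchfork FY$ coming from the third condition.

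First I would check the typing requirement: $p'(F\Delta\alpha X) = p(\Delta\alpha X) = TX$ for each $\alpha \in M$ and $X \in \BB$, so each $F\Delta\alpha$ lies in the $p'$-fiber of $T$ as required. Monotonicity of the mapping $\alpha\mapsto F\Delta\alpha$ transfers from $\Delta$ by applying $F$ to the witnessing $\EE$-identity morphism above $\id$; since $p'F=p$ and $F$ preserves identities, this image is an $\FF$-identity above $\id$.

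Next I would unpack the graded internalized Kleisli-lifting condition. By hypothesis there is an $\EE$-morphism $\varphi:(X \dpitchfork \Delta\alpha Y) \dox \Delta\beta X \to \Delta(\beta\cdot\alpha)Y$ with $p\varphi = \kl^\mT_{X,Y}$. Applying $F$ and rewriting the domain and codomain using strict monoidality together with $F(X\dpitchfork -) = X \ddpitchfork F(-)$, I obtain an $\FF$-morphism
$$F\varphi:(X \ddpitchfork F\Delta\alpha Y) \ddox F\Delta\beta X \to F\Delta(\beta\cdot\alpha)Y,$$
and $p'(F\varphi) = p\varphi = \kl^\mT_{X,Y}$ since $p'F = p$. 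This is exactly the witness required of $F\circ\Delta$ on the $\ddox$ side. Monotonicity of $F\circ-$ in the pointwise order follows similarly: any $\EE$-identity witnessing $\Delta\le\Delta'$ transports to an $\FF$-identity above the same underlying $\BB$-identity.

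The main point to be careful about is that the internalized Kleisli lifting $\kl^\mT_{X,Y}$ used on the $\Asign{\ddox}$ side is literally the same $\BB$-morphism as on the $\Asign\dox$ side; but this is clear because $\kl^\mT_{X,Y}$ is built entirely from $\mT$ and the SMCC structure of $\BB$ and does not refer to either refinement. So there is no substantive obstacle: the theorem is essentially a bookkeeping result recording the naturality of the assignment construction with respect to morphisms of refinements.
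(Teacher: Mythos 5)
Your proposal is correct and matches the paper's own proof: both apply $F$ to the $\EE$-morphism lifting $\kl^\mT_{X,Y}$ and rewrite its domain and codomain via strict monoidality, $p'\circ F = p$, and $F(X\dpitchfork -) = X\ddpitchfork F(-)$, with monotonicity transported the same way. (Your version is in fact slightly more careful than the paper's, which contains the typo ``$p\circ F = p'$'' where $p'\circ F = p$ is meant, and which leaves the order-theoretic bookkeeping implicit.)
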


We express $(\epsilon,\delta)$-privacy in metric spaces through the {\em path
  construction functor} $P:\RSRel\arrow\Met$.  Given an object $X \in \RSRel$,
we define a metric on the underlying set by counting the number of times
$\sim_X$ must be composed to relate two points. Such metrics are also known as
\emph{path metrics}, and the corresponding metric spaces are known as
\emph{path-metric spaces}.

\begin{definition}
  \label{def:path-metric} Let $X \in \RSRel$. The \emph{path metric}
  is a metric on $X$ defined as follows: $d(x, x')$ is the length $k$
  of the shortest path of elements $x_0, \ldots, x_k$ such that
  $x_0 \teq x$, $x_k \teq x'$, and $x_i \sim x_{i+1}$ for every
  $i \in \{0,\ldots,k-1\}$. If no such sequence exists, we set
  $d(x, x') \teq \infty$.  We write $PX$ for the corresponding metric
  space. This definition can be extended to a functor
  $\RSRel \to \Met$ that acts as the identity on morphisms.
\end{definition}

Conversely, we can turn any metric space into an object of $\RSRel$ by relating
elements at distance at most $1$.

\begin{definition}[At most one]
  \label{def:at-most-one}
  Given $X \in \Met$, we define $QX \in \RSRel$ by
  \begin{displaymath}
    QX=(X,\{(x,x')~|~d(x,x')\le 1\}), \quad Qf=f.
  \end{displaymath}
\end{definition}
\begin{theorem}\label{th:pathfun}
  The functor $P:\RSRel\arrow\Met$ is fully faithful, and a left adjoint to
  $Q:\Met\arrow\RSRel$.
\end{theorem}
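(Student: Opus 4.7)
The plan is to verify in turn that $P$ is a well-defined functor, that it is faithful and full, and finally to construct the unit/counit (or equivalently the hom-set bijection) witnessing $P\dashv Q$. Throughout, the guiding observation is that both $P$ and $Q$ act as the identity on underlying sets and on underlying functions, so every claim reduces to a statement comparing ``$x\sim_X x'$'' with ``$d_{PX}(x,x')\le 1$''.

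First I would check functoriality of $P$. Given $f\in\RSRel(X,Y)$ and a path $x=x_0\sim_X x_1\sim_X\cdots\sim_X x_k=x'$ of length $k$ in $X$, the morphism condition $x\sim_X x'\Rightarrow f(x)\sim_Y f(x')$ yields a path $f(x_0)\sim_Y f(x_1)\sim_Y\cdots\sim_Y f(x_k)$ of the same length in $Y$, so $d_{PY}(f(x),f(x'))\le k$. Taking the infimum over paths gives $d_{PY}(f(x),f(x'))\le d_{PX}(x,x')$, i.e.\ $Pf$ is non-expansive. Faithfulness is immediate because $P$ is identity on underlying functions. For fullness, suppose $f:PX\nearrow PY$ is non-expansive and $x\sim_X x'$. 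Then $d_{PX}(x,x')\le 1$, so $d_{PY}(f(x),f(x'))\le 1$. By definition of the path metric, $d_{PY}(a,b)\le 1$ means either $a=b$ or $a\sim_Y b$; by reflexivity of $\sim_Y$ both cases yield $f(x)\sim_Y f(x')$, so $f$ lives in $\RSRel(X,Y)$.

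Next I would establish the adjunction by producing the hom-set bijection $\Met(PX,Y)\cong\RSRel(X,QY)$ and checking naturality. In one direction, if $f:PX\nearrow Y$ and $x\sim_X x'$ then $d_Y(f(x),f(x'))\le d_{PX}(x,x')\le 1$, so $f(x)\sim_{QY} f(x')$; hence $f$ underlies a morphism in $\RSRel(X,QY)$. In the other direction, given $g\in\RSRel(X,QY)$ and any pair $x,x'$ with $d_{PX}(x,x')=k<\infty$, pick a witnessing path of length $k$; applying $g$ gives a chain of pairs at distance at most $1$ in $Y$, and the triangle inequality yields $d_Y(g(x),g(x'))\le k=d_{PX}(x,x')$. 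The case $d_{PX}(x,x')=\infty$ is vacuous. Both assignments are identity on underlying functions, so they are mutually inverse and natural in $X,Y$ without further work.

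The two delicate points to get right are the fullness argument, where one must use reflexivity of $\sim_Y$ to absorb the case $f(x)=f(x')$, and the backward direction of the adjunction bijection, where the triangle inequality is essential for turning a purely ``one-step'' preservation condition into a global non-expansiveness bound. No step seems to present a serious obstacle; the main care is simply in lining up the two different ways of encoding ``nearness'' (a single relation step versus a metric distance of at most one) so that the identity-on-carriers correspondence is genuinely a bijection of morphisms.
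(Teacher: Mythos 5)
Your proposal is correct and follows essentially the same route as the paper: the paper also establishes the literal hom-set equality $\Met(PX,Y)=\RSRel(X,QY)$ by the same two arguments (non-expansiveness plus the at-most-one relation in one direction, a witnessing path plus $k$-fold triangle inequality in the other), and then gets full faithfulness from the fact that the unit of $P\dashv Q$ is the identity. Your only deviation is cosmetic: you prove fullness directly via the observation that $d_{PY}(a,b)\le 1$ iff $a\sim_Y b$, which is exactly the paper's separate proposition $QPX=X$ inlined into the argument.
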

\begin{theorem}\label{th:transpath}
  The path metric functor $P:\RSRel\arrow\Met$ is a morphism of
  \Weak monoidal refinements:
  \begin{align*}
    &P:(\RSRel,1,\times,M,q)\arrow_\Set(\Met,1,\times,\infty\cdot-,p).
  \end{align*}

\end{theorem}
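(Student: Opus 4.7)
The plan is to verify the three conditions in the definition of morphism of weakly closed monoidal refinements. Throughout, $P$ acts as the identity on underlying sets and functions (a morphism $f:X\rearrow Y$ in $\RSRel$ gives a non-expansive $f:PX\nearrow PY$ by transporting chains pointwise), so every equation reduces to an equation of metrics on a common carrier.

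First, I would show that $P$ is strict symmetric monoidal. The terminal object is preserved trivially: $P(1)=1$. For the product, both $P(X\times Y)$ and $PX\times PY$ have carrier $|X|\times|Y|$, so it suffices to show the metrics agree. Any chain in $X\times Y$ projects componentwise to chains of the same length in $X$ and $Y$, giving $d_{P(X\times Y)}\ge\max(d_{PX},d_{PY})$. Conversely, given chains in $X$ and $Y$ of lengths $m\le n$, I would pad the $X$-chain with repetitions of its endpoint (using reflexivity of $\sim_X$) up to length $n$ and zip componentwise, producing a product chain of length $n$.

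Next I would verify that $(\mathrm{Id}_\Set,P)$ is a map of adjunctions from $M\dashv q$ to $\infty\cdot(-)\dashv p$. The identity $q=p\circ P$ holds because both functors return the carrier set. For $P\circ M=\infty\cdot(-)$, observe that $MX$ has only reflexive related pairs, so two distinct elements admit no connecting chain (path distance $\infty$), while equal elements are at distance $0$; this is exactly the metric of $\infty\cdot X$. Unit compatibility is trivial since all relevant unit maps are identities on carriers.

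Third, for each $X\in\Set$, I would check that $(P,P)$ is a map of adjunctions from $(-\times MX\dashv MX\Arrow-)$ to $(-\times\infty\cdot X\dashv\infty\cdot X\multimap-)$. The equation $P(Y\times MX)=PY\times\infty\cdot X$ follows from the first two steps. For $P(MX\Arrow Y)=\infty\cdot X\multimap PY$, both sides have carrier $|X|\to|Y|$, since every set function preserves the diagonal relation on $MX$ and is trivially non-expansive out of the discrete metric space $\infty\cdot X$. Because related pairs in $MX$ are exactly the reflexive ones, the relation on $MX\Arrow Y$ simplifies to $f\sim g\iff\forall x\in X.\,f(x)\sim_Y g(x)$, so path distance reduces to a pointwise question in $Y$. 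Pointwise evaluation of any chain gives $d_{P(MX\Arrow Y)}(f,g)\ge\sup_{x}d_{PY}(f(x),g(x))$. For the reverse, set $n\teq\sup_x d_{PY}(f(x),g(x))$; for each $x$ pick a $Y$-chain from $f(x)$ to $g(x)$ of length $d_x\le n$, pad it to length $n$ by repeating $g(x)$, and define $f_i(x)$ to be its $i$-th element. The resulting $f_0=f,f_1,\ldots,f_n=g$ is a chain in $MX\Arrow Y$. Unit compatibility holds because the unit morphisms in both adjunctions are the standard currying $y\mapsto\lambda x.(y,x)$, which $P$ leaves untouched.

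The main obstacle is the reverse inequality for $P(MX\Arrow Y)$: per-point chain choices would not obviously assemble into a chain of functions if the domain carried nontrivial related pairs, since adjacent functions would need to be related across distinct but related arguments. The argument succeeds because $MX$ has only the diagonal relation, decoupling the constructions at different points. This is a crucial use of the fact that condition (3) of the definition of morphism of weakly closed monoidal refinements only asks for preservation of exponentials whose first component is of the form $LX$, rather than for all exponentials in $\EE$.
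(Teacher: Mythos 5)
Your proof is correct, and in substance it coincides with the paper's: the crucial device---padding shorter chains with reflexivity edges so that they can be assembled componentwise or pointwise---is exactly the argument in the paper's proof of \cref{lem:product-preservation}. The difference is one of decomposition. The paper establishes preservation of discrete objects (\cref{lem:discreteness-preservation}) and preservation of \emph{arbitrary} (including infinite) products, and then condition (3) of the definition falls out because a discrete-domain exponential is a power: $MX \Arrow Y = \prod_{x\in X} Y$ in $\RSRel$ and $\infty\cdot X \multimap PY = \prod_{x\in X} PY$ in $\Met$, so exponential preservation is a special case of infinite-product preservation---this is precisely why \cref{lem:product-preservation} is stated for arbitrary families rather than just binary products. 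You instead verify $P(MX \Arrow Y) = \infty\cdot X \multimap PY$ directly, transplanting the same padding construction to function spaces. Each packaging has its merits: the paper's factorization is more economical and isolates the one lemma whose failure in the CPO setting is flagged in \cref{sec:limitations}, while your direct route makes explicit something the paper leaves implicit---that discreteness of $MX$ is what decouples the per-point chain choices, and hence why only exponentials with domain of the form $LX$ can be expected to be preserved, exactly as your closing paragraph observes.
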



\section{Metric Semantics of Graded Fuzz}
\label{sec:grfuzz}

We have all the ingredients we need to extend Fuzz. We fix a preordered monoid
$(M,\le,1,\cdot)$, augment Fuzz with $M$-graded monadic types
$\Tcons_\alpha \tau$, adjust the typing rules \eqref{eq:nongraded-return} and
\eqref{eq:nongraded-bind} to track the grading, and add a grading subsumption
rule.
\begin{gather}
  \label{eq:gradedsub}
  \inferrule
  {\Gamma\vdash e:\Tcons_\alpha\tau \\ \alpha\le\beta}
  {\Gamma\vdash e:\Tcons_{\beta}\tau}
  \\
  \label{eq:gradedeta}
  \inferrule
  {\Gamma\vdash e:\tau}
  {\infty\cdot\Gamma\vdash\breturn e: \Tcons_\alpha\tau}
  \\
  \label{eq:gradedbind}
  \inferrule
  {\Gamma\vdash e_1:\Tcons_\alpha\tau \\ \Delta,x:_\infty\tau\vdash e_2:\Tcons_{\beta}\sigma}
  {\Delta+\Gamma \vdash \bbind x\leftarrow e_1;e_2:\Tcons_{\alpha\cdot\beta}\sigma}
\end{gather}
Here, $\alpha$ and $\beta$ range over $M$. The interpretation in
\cref{sec:liftings} can be easily extended when we have an $M$-graded
\mbox{\Par\ox} $!$-lifting $\dot\dist$ of the distribution monad $\distmonad$
along $p:\Met\arrow\Set$, taking
$\sem{\Tcons_\alpha\tau}=\dot\dist\alpha\sem \tau$.

If instead we have a $M$-graded \mbox{\Par\times} $!$-lifting $\dot\dist$ of
$\distmonad$ along $p:\Met\arrow\Set$, we can again interpret the monadic type
with $\sem{\Tcons_\alpha\tau}=\dot\dist\alpha\sem\tau$. However, we can replace
\eqref{eq:gradedbind} with a stronger rule for $\bbind$:
\begin{equation}
  \label{eq:gradedbindcart}
  \inferrule
  {\Gamma\vdash e_1:\Tcons_\alpha\tau \\ \Gamma,x:_\infty\tau\vdash e_2:\Tcons_{\beta}\sigma}
  {\Gamma \vdash \bbind x\leftarrow e_1;e_2:\Tcons_{\alpha\cdot\beta}\sigma}
\end{equation}


\subsection{Modeling $(\epsilon, \delta)$-Differential Privacy in Graded Fuzz}

We have seen that differential privacy can be expressed in $\RSRel$, with the
sequential composition property providing a graded assignment that can be
transferred to $\Met$ through the path metric.  By \cref{pp:rdpcomp} and
\cref{th:grdeq}, in the \Weak monoidal refinement \eqref{eq:rsrelprod} we have
$ \rDP
  \in\Asign\dtimes(\distmonad,\R_{\ge 0}^{+}\times\R_{\ge 0}^{+}).
$
From \cref{th:mapasign}, the path construction maps $\rDP$ to a graded
\Par\dtimes assignment of $\Met$ on $\distmonad$:
\begin{displaymath}
  P\circ\rDP\in\Asign\dtimes(\distmonad,\R_{\ge 0}^{+}\times\R_{\ge 0}^{+})
\end{displaymath}
in the \Weak monoidal refinement \eqref{eq:metprod}.  Below we identify
$P\circ\rDP$ as a $\R_{\ge 0}^{+}\times\R_{\ge
  0}^{+}$-graded \Par\dtimes $!$-lifting of $\distmonad$.
By posing
\[
  \sem{\Tcons_{(\epsilon, \delta)} \tau} \triangleq P(\rDP(\epsilon,\delta)(\sem{\tau})) ,
\]
the following specializations of \eqref{eq:gradedsub},
\eqref{eq:gradedeta}, and \eqref{eq:gradedbindcart} are sound:
\begin{gather*}
  \inferrule
  {\Gamma\vdash e:\Tcons_{(\epsilon, \delta)}\tau \\
    \epsilon \leq \epsilon' \\ \delta \leq \delta'} {\Gamma\vdash
    e:\Tcons_{(\epsilon', \delta')}\tau}
  \\
  \inferrule {\Gamma\vdash e:\tau} {\infty\cdot\Gamma\vdash\breturn e:
    \Tcons_{(\epsilon, \delta)}\tau}
  \\
  \inferrule
  {\Gamma\vdash e_1:\Tcons_{(\epsilon, \delta)}\tau \\
    \Gamma,x:_\infty\tau\vdash e_2:\Tcons_{(\epsilon',
      \delta')}\sigma} {\Gamma \vdash \bbind x\leftarrow
    e_1;e_2:\Tcons_{(\epsilon + \epsilon', \delta + \delta')}\sigma}
\end{gather*}
Much like we did for the standard $!$-liftings in
\cref{sec:interpretation-max-divergence,sec:interpretation-statistical-distance},
we can add primitive distributions to the system. The basic building block for
$(\epsilon, \delta)$-differential privacy is the real-valued \emph{Gaussian} or
\emph{normal} distribution. Given a mean $\mu \in \R$ and a variance
$\sigma \in \R$, this distribution has density function
\[
  N(\mu, \sigma)(x) \triangleq \frac{1}{\sqrt{2 \pi \sigma^2}} \exp\left( \frac{(x - \mu)^2}{2 \sigma^2} \right) .
\]
A result from the theory of differential privacy states that if we have a
numeric query $q : db \to \R$ whose results differ by at most $1$ on adjacent
databases, then adding noise drawn from $N(0, \sigma)$ to the query's result
yields an $(\epsilon, \delta)$-differentially private algorithm as long as
$\sigma \geq s(\epsilon, \delta) \triangleq 2 \ln(1.25/\delta) / \epsilon$ (see,
e.g., \citet[Theorem A.1]{DR14}). Like we did for the Laplace distribution, we
may discretize the result of the noised query to any fixed precision while
preserving privacy, yielding a distribution we call $\hat{N}(\mu, \sigma)$. The
function $\lambda x.\, \hat{N}(x, \sigma) : \R \to \dist \R$ can then be
interpreted in $\RSRel$:
\[
  \lambda x.\, \hat{N}(x, s(\epsilon, \delta)) : Q\R \rearrow \rDP(\epsilon, \delta)(\R)
\]
Unfolding definitions, $Q\R$ relates pairs of real numbers that are at most $1$
apart under the standard Euclidean distance. The path construction gives a
non-expansive map between path-metric spaces:
\[
  \lambda x.\, \hat{N}(x, s(\epsilon, \delta)) : PQ\R \nearrow P(\rDP(\epsilon, \delta)(\R))
\]
The metric space $PQ\R$ is $\R$ but with the metric rounded up to the nearest
integer; we introduce a corresponding Fuzz type $\lceil \R \rceil$. Then, we can
introduce a new Fuzz term $\gauD[\epsilon, \delta]$ for $\epsilon, \delta > 0$
with type
\[
  \gauD[\epsilon, \delta] : \lceil \R \rceil \multimap \Tcons_{(\epsilon, \delta)} \R
\]
and the interpretation
\[
  \sem{\gauD[\epsilon, \delta]} \triangleq \lambda x.\, N(x, s(\epsilon, \delta)) .
\]

\paragraph*{Typing $(\epsilon, \delta)$-Differential Privacy}
We can now capture $(\epsilon, \delta)$-privacy via Fuzz types. Consider the
judgment:
\[
  \vdash e : \db \multimap \Tcons_{(\epsilon, \delta)} \tau
\]
where $\db$ is interpreted as the path-metric space $P(db, adj)$. Note that if
$\db = \rset \sigma$ is the space of sets of $\sigma$ and we take the Hamming
distance as the metric $d_DB$ on this space (as we did in previous examples),
then $(db, d_{DB})$ is automatically a path metric space for the relation
relating any two databases at Hamming distance at most $1$.  We have a
non-expansive map
\[
  \sem{e} : P(db, adj) \nearrow P(\rDP(\epsilon, \delta)(\sem{\tau}))
  .
\]
Since the path functor $P$ is full and faithful
(\cref{th:pathfun}), we have a relation-preserving map
\[
  \sem{e} : (db, adj) \rearrow \rDP(\epsilon, \delta)(\sem{\tau})
\]
in $\RSRel$.  By \cref{prop:dp-rel}, this map satisfies
$(\epsilon, \delta)$-privacy.
We give two examples to demonstrate the type system.  Consider the type $db
\multimap \R$, typically used to model $1$-\emph{sensitive queries}. Applying
$PQ$, we find that we can model $1$-sensitive queries with the type $\lceil db
\rceil \multimap \lceil \R \rceil$. Now $\lceil db \rceil$ rounds up the metric
on $db$ to the nearest integer, but since $db$ is already a path metric space,
$\lceil db \rceil$ and $db$ have the same denotations. Thus, $1$-sensitive
queries can be interpreted as type $\db \multimap \lceil \R \rceil$.

Let $q_1$ and $q_2$ be $1$-sensitive queries of type
$\db \multimap \lceil \R \rceil$, and consider the program
$\mathit{two}\_q$:
\begin{align*}
  \lambda db.\, &\bbind a_1 \leftarrow \gauD[\epsilon, \delta](q_1(db));\\
                &\bbind a_2 \leftarrow \gauD[\epsilon, \delta](q_2(db));\\
                &\breturn (a_1 + a_2)
\end{align*}
This program evaluates the first query $q_1$ and adds Gaussian noise to the
answer, evaluates the second query $q_2$ and adds more Gaussian noise, and
finally returns the sum of the two noisy answers. By applying the typing rules
for the Gaussian distribution, along with the graded monadic rules, we can
derive the following type:
\[
  \vdash \mathit{two}\_q : \db \multimap \Tcons_{(2\epsilon, 2\delta)}
  \R
\]
Though the database $db$ is used twice in the program, it has sensitivity $1$ in
the final type. This accounting follows from the bind rule, which allows the
contexts of its premises to be shared. The fact that the database is used twice
is instead tracked through the grading on the codomain, where the privacy
parameters $(\epsilon, \delta)$ sum up. By soundness of the type system,
$\mathit{two}\_q$ is $(2\epsilon, 2\delta)$-differentially private.

Other types can capture variants of differential privacy. For example,
suppose added the queries first and noised just once:
\[
  \lambda db.\, \blet s \leftarrow q_1(db) + q_2(db); \gauD[\epsilon,
  \delta](s)
\]
We use standard syntactic sugar for let bindings; call this program
$\mathit{two}\_q'$. We can derive the following type:
\[
  \vdash \mathit{two}\_q' : {!_2} \db \multimap \Tcons_{(\epsilon,
    \delta)} \R
\]
This type is not equivalent to the type for
$\mathit{two}\_q$. However, we can still interpret it in terms of
differential privacy.  In general, consider the following judgment:
\[
  \vdash e : {!_2} \db \multimap \Tcons_{(\epsilon, \delta)} \tau
\]
By soundness, the interpretation is non-expansive:
\[
  \sem{e} : 2 \cdot P(db, adj) \nearrow P(\rDP(\epsilon,
  \delta)(\sem{\tau})) .
\]
Though the scaling of a path metric is not necessarily a path metric,
we can still give a meaning to this judgment. For any two input
databases $(x, x') \in adj$, the distance between $\sem{e}x$ and
$\sem{e}x'$ in $P(\rDP(\epsilon, \delta)(\sem{\tau}))$ is at most
$2$. Suppose that the distance is exactly $2$ (smaller distances yield
stronger privacy bounds). Then, there must exist an intermediate
distribution $y \in \dist |\sem{\tau}|$ such that
\[
  \sem{e}x \sim_{\rDP{(\epsilon, \delta)}(\sem{\tau})} y
  \sim_{\rDP{(\epsilon, \delta)}(\sem{\tau})} \sem{e}x' .
\]
Unfolding definitions, a small calculation shows that the output
distributions must be related by
\[
  \sem{e}x \sim_{\rDP(2\epsilon, (1 +
    \exp(\epsilon))\delta)(\sem{\tau})} \sem{e}x' .
\]
Hence we have a relation-preserving map
\[
  \sem{e} : (db, adj) \rearrow \rDP(2\epsilon, (1 +
  \exp(\epsilon))\delta)(\sem{\tau})
\]
and by \cref{prop:dp-rel}, the map $\sem{e}$ and our program $\mathit{two}\_q'$
satisfy $(2\epsilon, (1 + \exp(\epsilon))\delta)$-differential privacy.

\subsection{Modeling Other Divergences}
\label{sec:other-divergences}



The $(\epsilon, \delta)$-differential privacy property belongs to a broader
class of probabilistic relational properties: two related inputs lead to two
output distributions that are a bounded distance apart, as measured by some
divergence on distributions. By varying the divergence,
these properties can capture different notions of probabilistic sensitivity.
To support a grading, the divergences must be composable in a certain sense.

\begin{definition} \label{def:graded-comp} Let
  $H=(\R^\infty_{\ge 0},\le,u,\bullet)$ be a partially ordered monoid
  over the non-negative extended reals. A family of
  divergences $d_X$ on $\dist X$ indexed by sets $X$ is {\em
    $H$-composable} if for any $f,g:X\arrow\dist Y$ and
  $\mu,\nu\in\dist X$, we have
  \begin{displaymath}
    d_Y(f^\dagger(\mu),g^\dagger(\nu))\le
    d_X(\mu,\nu) \bullet \sup_{x\in X}d_Y(f(x),g(x)).
  \end{displaymath}
\end{definition}

Previously, \citet{BartheO13} proposed \emph{weak} and \emph{strong
composability} to study sequential composition properties for the class of
$f$-divergences. (The skew divergence in $(\epsilon, \delta)$-differential
privacy is an example of an $f$-divergence.) These notions coincide when working
with full distributions rather than sub-distributions, as in our settings. Given
any family of composable divergences, we can build a corresponding {\em graded}
$!$-lifting of $\RSRel$ on $\distmonad$.

\begin{theorem} \label{thm:graded-comp-assign} Let $d_X$ be an
  $H$-composable family of divergences on $\dist X$ and
  $q:\RSRel\arrow\Set$ be the forgetful functor. Define a mapping
  $R(d)$ by
  \begin{displaymath}
    R(d)(\delta)(X)\teq
    (\dist X,\{(\mu,\nu)~|~d_X(\mu,\nu)\le\delta, d_X(\nu,\mu)\le\delta\}).
  \end{displaymath}
  Then $R(d)$ is a monotone mapping of type
  $(\R^\infty_{\ge 0},\le)\arrow \Ord(q,\dist)$, and is an
  $H$-graded \Par\times assignment of $\RSRel$ on $\distmonad$.
\end{theorem}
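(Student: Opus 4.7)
My plan is to verify three items in order: (i) each $R(d)(\delta)(X)$ is a well-defined object of $\RSRel$ above $\dist X$; (ii) $R(d)$ is monotone in $\delta$; and (iii) it satisfies the $H$-graded assignment condition on $\kl^{\distmonad}$. Items (i) and (ii) are bookkeeping. Reflexivity of the relation in $R(d)(\delta)(X)$ uses the standard convention $d_X(\mu,\mu)=0$ for a divergence, while symmetry is built in by intersecting both orientations $d_X(\mu,\nu)\le\delta$ and $d_X(\nu,\mu)\le\delta$. That $qR(d)(\delta)(X)=\dist X$ holds by construction, and the order $\delta\le\delta'\Rightarrow R(d)(\delta)(X)\le R(d)(\delta')(X)$ in $\Ord(q,\dist)$ holds because enlarging $\delta$ enlarges the admitted set of pairs.

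For (iii), I need to show $\kl^{\distmonad}_{X,Y}:(X\dpitchfork R(d)(\alpha)(Y))\dtimes R(d)(\beta)(X)\darrow R(d)(\beta\cdot\alpha)(Y)$. Since the refinement \eqref{eq:rsrelprod} is cartesian, $\dtimes$ is the categorical product, and since $M$ endows $X$ with the diagonal relation, $X\dpitchfork Z$ is the internal hom $MX\Arrow Z$ in $\RSRel$. Unfolding, $f\sim g$ in $X\dpitchfork R(d)(\alpha)(Y)$ amounts to $f(x)\sim_{R(d)(\alpha)(Y)}g(x)$ for all $x\in X$, equivalently $\sup_{x}d_Y(f(x),g(x))\le\alpha$ and $\sup_{x}d_Y(g(x),f(x))\le\alpha$. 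Similarly $\mu\sim\nu$ in $R(d)(\beta)(X)$ gives $d_X(\mu,\nu)\le\beta$ and $d_X(\nu,\mu)\le\beta$.

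Applying $H$-composability to $(f,g,\mu,\nu)$ yields $d_Y(f^\dagger\mu,g^\dagger\nu)\le d_X(\mu,\nu)\bullet\sup_{x}d_Y(f(x),g(x))\le\beta\bullet\alpha$, and applying it symmetrically to $(g,f,\nu,\mu)$ gives the analogous bound $d_Y(g^\dagger\nu,f^\dagger\mu)\le\beta\bullet\alpha$. Together, identifying $\cdot$ with $\bullet$, these place $(f^\dagger\mu,g^\dagger\nu)$ in the relation of $R(d)(\beta\cdot\alpha)(Y)$, so the underlying set-level Kleisli lifting $(f,\mu)\mapsto f^\dagger\mu$ lifts to the required $\RSRel$-morphism. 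The only real subtlety is to keep the orientation of composability straight---it places $d_X$ on the left of $\sup d_Y$---so that the output threshold is $\beta\bullet\alpha$ (matching the convention $\beta\cdot\alpha$ in the graded assignment) rather than $\alpha\bullet\beta$; beyond this, no nontrivial ingredients are used.
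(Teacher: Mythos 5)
Your proof is correct, and all three of its components check out against the paper's definitions. For comparison: the paper itself omits the proof of this theorem, so the closest benchmark is the route it takes for the analogous fact that $\rDP$ is a graded assignment (\cref{pp:rdpcomp}). There, instead of checking the internalized-Kleisli condition directly, the appendix verifies an ``$M$-graded sequentially composable family'' condition and then invokes the equivalence $\Comp(\distmonad,M)=\Asign\dtimes(\distmonad,M)$. You instead unfold the defining condition of a graded \Par\dtimes assignment inside $\RSRel$ itself --- using $X\dpitchfork Z = MX\Arrow Z$ with the diagonal relation on $MX$, and the componentwise product relation --- and then apply $H$-composability twice, once per orientation, to land in the relation of $R(d)(\beta\bullet\alpha)(Y)$. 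Your route is more elementary and self-contained (no equivalence theorem needed), and it is exactly the verification that \cref{def:graded-comp} is engineered to make immediate; the paper's $\Comp$-based route would instead buy uniformity with its $\rDP$ argument. Two steps you use deserve to be made explicit, though neither is a gap: reflexivity of each $R(d)(\delta)(X)$ genuinely requires $d_X(\mu,\mu)=0$, which does \emph{not} follow from $H$-composability and is an implicit assumption on what ``divergence'' means (the paper relies on the same convention, e.g.\ when it calls $\dKL$ reflexive); and the inequality $d_X(\mu,\nu)\bullet\sup_{x}d_Y(f(x),g(x))\le\beta\bullet\alpha$ uses monotonicity of $\bullet$ in both arguments, which is supplied by $H$ being a \emph{partially ordered} monoid.
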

As usual, we identify $R(d)$ and the corresponding
$H$-graded \Par\times $!$-lifting of $\distmonad$ along
$q:\RSRel\arrow\Set$.


We extend Fuzz with two examples of composable divergences, briefly sketching
the composition rules, graded assignment structure, and Fuzz typing rules. We
present further examples in \cref{app:other-divegences}.

\paragraph*{KL Divergence}
The \emph{Kullback-Leibler (KL)} divergence, also known as
\emph{relative entropy}, measures the difference in information
between two distributions. For discrete distributions over $X$, it is
defined as:
\[
  \dKL_X(\mu, \nu) \teq \sum_{i \in X} \mu(i) \log
  \frac{\mu(i)}{\nu(i)} ,
\]
where summation terms with $\mu(i) = \nu(i) = 0$ are defined to be $0$, and
terms with $\mu(i) > \nu(i) = 0$ are defined to be $\infty$. This divergence is
reflexive, but it is not symmetric and does not satisfy the triangle
inequality. (It is not immediately obvious, but the KL divergence is always
non-negative.) We can define a family of relations that models pairs of
distributions at bounded KL divergence. For any $\alpha \in \R$ and set $X$, we
define:
\begin{align*}
  & \rKL(\alpha)(X) \teq R(\dKL)(\alpha)(X)\\
  & \quad = ( \dist X, \{ (\mu, \nu)
    \mid \dKL_X(\mu, \nu) \leq \alpha, \dKL_X(\nu, \mu) \leq \alpha \} )
\end{align*}
Note that $\alpha$ need not be an integer---it can be any real number.  This
relation is reflexive and symmetric, hence an object in
$\RSRel$. \citet[Proposition 5]{BartheO13} show that $\dKL$ is $H$-composable
for $H = (\R, \leq, 0, +)$, so $\rKL(\alpha)(X)$ is a $H$-graded \Par\times
assignment of $\RSRel$ on $\distmonad$ by \cref{thm:graded-comp-assign}. By
applying the path construction, we get a $H$-graded \Par\times assignment of
$\Met$ on $\distmonad$ which we can use to capture KL divergence as a graded
distribution type:
\[
  \sem{\Tcons^{\dKL}_\alpha \tau} \teq P(\rKL(\alpha)(\sem{\tau}))
\]
For $\bbind$, for instance, we obtain the following typing rule:
\[
  \inferrule
  {\Gamma\vdash e_1:\Tcons^{\dKL}_{\alpha}\tau \\
    \Gamma,x:_\infty\tau\vdash e_2:\Tcons^{\dKL}_{\beta}\sigma}
  {\Gamma \vdash \bbind x\leftarrow e_1;e_2:\Tcons^{\dKL}_{\alpha +
      \beta}\sigma}
\]
Like we did for the max divergence of differential privacy, we can
also introduce primitive distributions and typing rules into Fuzz. For
instance, a standard fact in probability theory is that the standard
Normal distribution satisfies the bound
\[
  \dKL_{\R} (N(\mu_1, 1), N(\mu_2, 1)) \leq (\mu_1 - \mu_2)^2
\]
If we again discretize this continuous distribution to
$\hat{N}(\mu, 1)$ and interpret the primitive term
$\sem{\norD} = \lambda x.\, \hat{N}(x, 1)$, the following typing rule
is sound:
\[
  \inferrule { } {\Gamma \vdash \norD : \lceil \R \rceil \multimap
    \Tcons^{\dKL}_{1} \R}
\]

\paragraph*{$\chi^2$ Divergence}
For discrete distributions on $X$, the $\chi^2$ divergence is defined as
\[
  \dChi_X(\mu, \nu) \teq \sum_{i \in X} \frac{(\mu(i) -
    \nu(i))^2}{\nu(i)} ,
\]
where summation terms with $\mu(i) = \nu(i) = 0$ are defined to be
$0$, and terms with $\mu(i) > \nu(i) = 0$ are defined to be $\infty$.
Note that this divergence is not symmetric and does not satisfy the
triangle inequality. We define a family of reflexive symmetric
relations that models pairs of distributions at bounded
$\chi^2$-divergence. For any $\alpha \geq 0$ and set $X$, we pose
\begin{align*}
  & \rChi(\alpha)(X)\teq R(\dChi)(\alpha)(X) \\
  & \quad = ( \dist X, \{ (\mu, \nu)
    \mid \dChi_X(\mu, \nu) \leq \alpha, \dChi_X(\nu, \mu) \leq \alpha \})
\end{align*}
\citet[Theorem 5.4]{OlmedoThesis} shows that $\dChi$ is $H$-composable
for the monoid $H = (\R, \leq, 0, +_\chi)$, where
$\alpha +_\chi \beta = \alpha + \beta + \alpha \beta$. Hence
$\rChi(\alpha)(X)$ is a $H$-graded \Par\times assignment of $\RSRel$
on $\distmonad$ by \cref{thm:graded-comp-assign}. By applying the path
construction, we get a $H$-graded \Par\times assignment of $\Met$ on
$\distmonad$ which we can use to interpret a graded distribution type
capturing $\chi^2$-divergence:
\[
  \sem{\Tcons^{\dChi}_\alpha \tau} \teq P(\rChi(\alpha)(\sem{\tau}))
\]
The corresponding typing rule for $\bbind$ becomes:
\[
  \inferrule
  {\Gamma\vdash e_1:\Tcons^{\dChi}_{\alpha}\tau \\
    \Gamma,x:_\infty\tau\vdash e_2:\Tcons^{\dChi}_{\beta}\sigma}
  {\Gamma \vdash \bbind x\leftarrow e_1;e_2:\Tcons^{\dChi}_{\alpha +_\chi \beta}\sigma}
\]

\subsection{Further Extensions}

\paragraph*{Internalizing Group Privacy}
Differential privacy compares the results of a program when run on two input
databases at distance $1$. These guarantee can sometimes be extended to cover
pairs of inputs at distance $k$, so-called called \emph{group privacy}
guarantees. Roughly speaking, an algorithm is said to be
$(\epsilon(k), \delta(k))$-differentially private for groups of size $k$ if for
any two inputs at distance $k$, the output distributions satisfy the divergence
bound for $(\epsilon(k), \delta(k))$-differential privacy.

For standard $\epsilon$-differential privacy, group privacy is
straightforward: an $\epsilon$-private program is automatically
$k \cdot \epsilon$-private for groups of size $k$. This clean, linear
scaling of the privacy parameters is the fundamental reason why the
original Fuzz language fits $\epsilon$-differential privacy. In fact, group
privacy with linear scaling is arguably a more accurate description of the
properties captured by Fuzz---it just so happens that this seemingly stronger
property coincides with $\epsilon$-privacy.

In general, however, group privacy guarantees are not so clean. For
$(\epsilon, \delta)$-differential privacy, the parameters also degrade when
inputs are farther apart, but this degradation is not linear. In a sense, our
perspective generalizes $(\epsilon, \delta)$-differential privacy to group
privacy, a notion that better matches the linear nature of Fuzz.  For instance,
the type ${!_2} \db \multimap \Tcons_{(\epsilon, \delta)} \R$ in the last example
represents the group privacy guarantee when $(\epsilon, \delta)$-private
algorithms are applied to groups of size $2$. While we can explicitly compute
the corresponding privacy parameters, this unfolded form seems unwieldy to
accommodate in Fuzz.

\paragraph*{Handling Advanced Composition}
The typing rules we have seen so far capture two aspects of
$(\epsilon, \delta)$-differential privacy: primitives such as the Gaussian
mechanism and sequential composition via the bind rule. In practice,
$(\epsilon, \delta)$-privacy is often needed to apply the \emph{advanced
  composition theorem}~\citep{DRV10}. While standard composition simply adds up
the $(\epsilon, \delta)$ parameters, the advanced version allows trading off the
growth of $\epsilon$ with the growth of $\delta$. By picking a $\delta$ that is
slightly larger than the sum of the individual $\delta$ parameters, advanced
composition ensures a significantly slower growth in $\epsilon$.

Unlike the case of standard composition, the growth of the indices in advanced
composition is not given by a monoid operation, so it is typically applied to
blocks of $n$ programs rather than two programs at a time.  We can express this
pattern in Fuzz by adding a family of higher-order primitives
$(AC_n)_{n \in \N}$, where $AC_n$ applies advanced composition for exactly $n$
iterations.  The type of these primitives is
\[
  !_\infty(!_\infty \tau \multimap \db \multimap \Tcons_{(\epsilon, \delta)} \tau)
  \multimap (!_\infty \tau \multimap \db \multimap \Tcons_{(\epsilon^*, \delta^*)} \tau),
\]
where $\epsilon^*, \delta^*$ are as in the advanced composition theorem:
\[
  \epsilon^* = \epsilon \sqrt{2 n \ln(1/\delta')} + n \epsilon (\exp(\epsilon) - 1)
  \qquad
  \delta^* = n * \delta + \delta'
\]
for any $\delta' \in (0, 1)$.

\section{Handling Non-Termination}
\label{sec:limitations}

Most of our development would readily generalize to the full Fuzz language,
which includes general recursive types (and hence also non-terminating
expressions). In prior work \citep{AGHKC17}, we modeled the deterministic
fragment of Fuzz with metric CPOs---ordered metric spaces that support
definitions of non-expansive functions by general recursion.  We can extend this
work to encompass probabilistic features by endowing the Jones-Plotkin
probabilistic powerdomain~\citep{JP89} with metrics, much like was done in
\cref{sec:liftings}.
\aa{Say something about how we would adapt the theory of assignments
  to this in a general way.}
Briefly, the order on the probabilistic powerdomain $\mathcal{E}(X)$
is given by: $\mu \sqsubseteq \nu$ if and only if for any Scott-open
set $U$ of the CPO $X$, $\mu(U) \leq \nu(U)$ holds.  The
statistical distance and max divergence are all defined continuously
and pointwise from the probabilities $\mu(U)$, and they satisfy the
compatibility conditions required for metric CPOs. The proofs that
these distances form liftings of the probabilistic powerdomain
generalize by replacing sums over countable sets with integrals.

While the simple distances pose no major problem, the same cannot be said about
the path metric construction. A natural attempt to generalize relations to CPOs
is to require \emph{admissibility}: relations should be closed under limits of
chains to support recursive function definitions.  Unfortunately, the notion of
admissibility is not well-behaved with respect to relation composition: the
composite of two admissible relations may not be admissible. This is an obstacle
when defining the path metric, since a path of length $n$ in the graph induced
by the relation is simply a pair of points related by its $n$-fold
composition. Roughly, because admissible relations fail to compose, the path
construction does not yield metric CPOs in general, and does not form a morphism
of refinements.

The situation can be partially remedied by categorical arguments.  Both the
category of reflexive, symmetric admissible relations and the category of metric
CPOs can be characterized as fibrations over the category of CPOs with suitably
complete fibers. This allows us to define an analog of the path construction
abstractly as the left adjoint of the $Q$ functor of \cref{sec:path}, which
builds the ``at most one'' relation.  However, this construction does not
inherit the pleasant properties of the path metric on sets and functions.  More
precisely, the proof of \cref{lem:product-preservation} shown in the Appendix,
which is instrumental for showing soundness of the bind rule for
$(\epsilon, \delta)$-differential privacy, does not carry over.

\section{Related Work}
\label{sec:rw}

\paragraph*{Language-Based Techniques for Differential Privacy}
Owing to its clean composition properties, differential privacy has been a
fruitful target for formal verification. Our results build upon
Fuzz~\citep{Reed:2010}, a linear type system for differential privacy that has
subsequently been extended with sized
types~\citep{DBLP:conf/popl/GaboardiHHNP13} and algorithmic
typechecking~\citep{d2013sensitivity,AGGH14}.

Adaptive Fuzz \citep{adafuzz} is a recent extension that features an outer layer
for constructing and manipulating Fuzz programs---for instance, using program
transformations and partial evaluation---before calling the typechecker and
running the query.  By tracking privacy externally, and not in the type system,
Adaptive Fuzz supports many composition principles for
$(\epsilon, \delta)$-differential privacy, such as the advanced composition
theorem and adaptive variants called privacy filters. Our work expresses
$(\epsilon, \delta)$-privacy and basic composition in the type system, rather
than using a two-level design.

Until recently, the only type system we were aware of that could capture
$(\epsilon, \delta)$-privacy was $\mathsf{HOARe}^2$~\citep{BGGHRS15}, a
relational type system that has been extended to handle other
$f$-divergences~\citep{BFGAGHS16}.  Compared to our proposal, one drawback is
that it only provides guarantees when private algorithms are applied to inputs
that are at most a fixed distance apart.  In contrast, our sensitivity-based
approach can reason about private functions applied to inputs at arbitrary
distances.

Duet~\citep{duet}, a more recent design, proposes a two-layer type system for
handling $(\epsilon,\delta)$-privacy: one layer tracks sensitivity, analogously
to Fuzz, whereas the other layer tracks the $\epsilon$ and $\delta$ parameters
through composition.  The relationship between this system and ours is not yet
clear, but we speculate that there might be a connection between its two layers
and the path adjunction, the inner sensitivity layer corresponding to $\Met$,
and the outer layer corresponding to $\RSRel$.

Recently, variations of differential privacy have been proposed for designing
mechanisms with better accuracy.  These variations are motivated by properties
of continuous distributions and can be characterized through a span
lifting~\citep{SBGHK19}.  We hope to adapt our approach to reason about these
notions of privacy over discrete distributions, as we have for the Laplace
mechanism. An interesting problem for future work would be to extend our
semantics to continuous distributions, perhaps by leveraging recent advances in
probabilistic semantics~\citep{DBLP:journals/pacmpl/VakarKS19}.

\paragraph*{Verification of Probabilistic Relational Properties}
The last decade has seen significant developments in verification for
probabilistic relational properties other than differential privacy. Our
work is most closely related to techniques for reasoning about
$f$-divergences~\citep{BartheO13,OlmedoThesis}. Recent work by
\citet{BEGHS16} develops a program logic for reasoning about a
probabilistic notion of sensitivity based on couplings and the
Kantorovich metric. \citet{BEGHS16} identified path metrics as a
useful concept for formal verification, in connection with the path
coupling proof technique. Our work uses path metrics for a different
purpose: interpreting relational properties as function sensitivity.

The path adjunction can also be defined as a general
construction on enriched categories \citep{Lawvere1973}.  Given a
monoidal category $\mathcal{V}$ with coproducts, the forgetful functor
$\mathcal{V}-\mathsf{Cat} \to \mathcal{V}-\mathsf{Graph}$ has a left
adjoint generalizing the construction of the free category on a
graph. When $\mathcal{V} = ([0,\infty]^{\ge}, +, 0)$, a
$\mathcal{V}$-category is a metric space without the
symmetry axiom and a $\mathcal{V}$-graph is a weighted graph.
The at-most-one relation of \cref{def:at-most-one} yields a further
adjunction between $\mathcal{V}$-graphs and the category of reflexive
relations. The composition of these two adjunctions restricted to true symmetric
metric spaces and symmetric relations is precisely the path adjunction.

\paragraph*{Categorical Semantics for Metrics and Probabilities}
Our constructions build on a rich literature in categorical semantics for metric
spaces and probability theory. In prior work~\citep{AGHKC17}, we modeled the
non-probabilistic fragment of the Fuzz language using the concept of a metric
CPO; we have adapted this model of the terminating fragment of the language to
handle probabilistic sampling. \citet{DBLP:journals/entcs/Sato16} introduced a
{\em graded relational lifting} of the Giry monad for the semantics of
relational Hoare logic for the verification of $(\epsilon,\delta)$-differential
privacy with continuous distributions. Our graded liftings are similar to his
graded liftings, but the precise relationship is not yet clear.  Reasoning about
metric properties remains an active area of
research~\citep{mardare2016quantitative,plotkin-pps}.

\section{Conclusion and Future Directions}
\label{sec:conclusions}

We have extended the Fuzz programming language to handle probabilistic
relational properties beyond $\epsilon$-differential privacy, including
$(\epsilon, \delta)$-differential privacy and other properties based on
composable $f$-divergences. We introduced the categorical notion of
parameterized lifting to reason about $(\epsilon, \delta)$-differential privacy
in a compositional way. Finally, we cast relational properties as sensitivity
properties through a path metric construction.

There are several natural directions for future work. Most concretely, the
interaction between the path metric construction and non-termination remains
poorly understood. While the differential privacy literature generally does not
consider non-terminating computations, extending our results to CPOs would
complete the picture. More speculatively, it could be interesting to understand
the path construction through calculi that include adjunctions as type
constructors \citep{adjointlogic}. This perspective could help smooth
the interface between relational and metric reasoning.

\section*{Acknowledgment}

Katsumata was supported by JST ERATO HASUO Metamathematics for Systems
Design Project (No. JPMJER1603) and JSPS KAKENHI (Grant-in-Aid for
Scientific Research (C)) Grant Number JP15K00014. This work was also supported
by a Facebook TAV award.
Marco Gaboardi was partially supported by NSF under grant  \#1718220.

\bibliographystyle{plainnat} \bibliography{header,refs}

\balance

\onecolumn

\appendix

\subsection{Parameterized $L$-Relative Liftings}

Relative monads \citep{AltenkirchCU14} generalize ordinary monads by
allowing the unit and Kleisli lifting operations to depend on another
functor, akin to the role of the comonad $!$ in \eqref{eq:unitne} and
\cref{def:inftylifting}.  It is natural to wonder if parameterized
liftings are related to relative monads, but the two notions are
actually distinct. While parameterized liftings were designed to model
a bind rule under a context of variables, the Kleisli lifting of a
relative monad can only handle bind in empty contexts, which
corresponds to setting $Z$ to $\dunit$ in \cref{def:inftylifting}.

In the classical case, we can parameterize bind by combining the
Kleisli lifting of a monad with a strength.  While a notion of
strength also exists for relative monads \citep{uusutalucalco}, it was
introduced with a different purpose in mind, as an analogue of {\em
  arrows} in functional programming languages.  We add extra
conditions to relative monads so that they are above $T$, and so that
the parameterization is taken into account.
\begin{definition}\label{def:relativelifting}
  A {\em \Par\dox $L$-relative lifting of $\mT$ along
    $p:\EE\arrow\BB$} is a mapping $\Delta:|\BB|\arrow|\EE|$ such that
  1) $p\Delta X=T X$, and 2) the parameterized Kleisli lifting
  \eqref{eq:parameterized-kleisli} of $\mT$ satisfies
  \begin{displaymath}
    f:Z\dox LX\darrow \Delta Y\implies \pklift f:Z\dox\Delta X\darrow\Delta Y
  \end{displaymath}
\end{definition}
Every \Par\dox $L$-relative lifting $\Delta$ of $\mT$ satisfies
1) $\eta_{X}:LX\darrow\Delta X$ and
2) $f:LX\darrow\Delta Y$ implies $\klift f:\Delta X\darrow\Delta Y$.
From the faithfulness of $p$, these two properties imply that $\Delta$
is a $L$-relative monad.

The graded variant of \Par\dox $L$-relative lifting is defined as follows:
\begin{definition}
  An {\em $M$-graded \Par\dox $L$-relative lifting} is a monotone
  function $\Delta:(M,\le)\arrow\Ord(p,T)$ such that the parameterized
  Kleisli lifting \cref{eq:parameterized-kleisli} of $\mT$ satisfies
  \begin{displaymath}
    f:Z\dox LX\darrow \Delta \alpha Y\implies
    \pklift f:Z\dox\Delta \beta X\darrow\Delta (\beta\cdot \alpha)Y.
  \end{displaymath}
\end{definition}
This has the indistinguishability relation as an instance.
\begin{proposition}\label{pp:rdpcomp}
  Let $\R_{\ge 0}^{+}$ be the additive monoid of nonnegative real
  numbers. Then in the \Weak monoidal refinement \eqref{eq:rsrelprod},
  the indistinguishability relation, regarded as a mapping
  of type $\rDP:\R_{\ge 0}\times \R_{\ge 0}\arrow |\RSRel|$,
  satisfies
  \begin{align*}
    \rDP
    &\in\RLift\times(\distmonad,\R_{\ge 0}^{+}\times\R_{\ge 0}^{+}).
  \end{align*}
\end{proposition}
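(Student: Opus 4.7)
The plan is to verify the three conditions that define an $M$-graded $\times$-parameterized $L$-relative lifting, where $L=M:\Set\arrow\RSRel$ is the diagonal-relation functor, $M=\R_{\ge 0}^{+}\times\R_{\ge 0}^{+}$ is the additive product monoid, and the base monad is $\distmonad$. The type and monotonicity conditions are immediate; the substance of the proof lies in the parameterized Kleisli closure, which I would reduce to the sequential composition theorem \cref{thm:dp-seq-comp}.

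For the bookkeeping part: the underlying set of $\rDP(\epsilon,\delta)(X)$ is $\dist X$ by definition, so $q\circ\rDP(\epsilon,\delta)=\distmonad$, and hence $\rDP(\epsilon,\delta)\in\Ord(q,\distmonad)$. Monotonicity with respect to the componentwise order is immediate: if $(\epsilon,\delta)\le(\epsilon',\delta')$, then any pair $(\mu,\nu)$ satisfying the $(\epsilon,\delta)$-indistinguishability inequalities automatically satisfies the weaker $(\epsilon',\delta')$-inequalities, so $\id_{\dist X}$ lifts to an $\RSRel$-morphism from $\rDP(\epsilon,\delta)(X)$ to $\rDP(\epsilon',\delta')(X)$.

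For the Kleisli closure, fix $f:Z\times MX\darrow\rDP(\epsilon_2,\delta_2)(Y)$. Since $MX$ carries the diagonal relation, this unpacks to: $f:Z\times X\arrow\dist Y$ is a function such that for every $z\sim_Z z'$ and every $x\in X$, the distributions $f(z,x)$ and $f(z',x)$ are $(\epsilon_2,\delta_2)$-indistinguishable. I need to show that for every $\mu\sim_{\rDP(\epsilon_1,\delta_1)(X)}\nu$, the outputs $\pklift{f}(z,\mu)$ and $\pklift{f}(z',\nu)$ are $(\epsilon_1+\epsilon_2,\delta_1+\delta_2)$-indistinguishable. The plan is to instantiate \cref{thm:dp-seq-comp} on an auxiliary two-point database space $\{z,z'\}$ with the full adjacency relation, taking $\phi:\{z,z'\}\arrow\dist X$ with $\phi(z)\teq\mu$ and $\phi(z')\teq\nu$. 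By hypothesis $\phi$ is $(\epsilon_1,\delta_1)$-differentially private, and the restriction of $f$ to $\{z,z'\}\times X$ is $(\epsilon_2,\delta_2)$-private in its first argument for each fixed $x$. The sequential composition theorem then guarantees that $w\mapsto\pklift{f}(w,\phi(w))$ is $(\epsilon_1+\epsilon_2,\delta_1+\delta_2)$-private; evaluating at $z$ and $z'$ yields exactly the desired indistinguishability. The only delicacy is to confirm that the categorical parameterized Kleisli lifting, defined via the $\times$-strength of $\distmonad$ on $\Set$, agrees pointwise with the concrete composition $\pklift{f}(w,\phi(w))=\sum_{x}\phi(w)(x)\,f(w,x)$ implicit in \cref{thm:dp-seq-comp} --- a direct unfolding of \eqref{eq:parameterized-kleisli}.
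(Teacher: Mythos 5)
Your proof is correct, but it takes a genuinely different route from the paper's. The paper never checks the relative-lifting implication pointwise: instead it introduces the auxiliary preorder $\Comp(\mT,M)$ of $M$-graded sequentially composable families (\cref{def:compfam}), whose defining condition is a verbatim categorical transcription of \cref{thm:dp-seq-comp} with $db$ ranging over arbitrary objects of $\RSRel$, so that $\rDP\in\Comp(\distmonad,\R_{\ge 0}^{+}\times\R_{\ge 0}^{+})$ holds essentially by inspection; it then proves $\Asign\dtimes(\mT,M)=\Comp(\mT,M)$ by a currying/projection argument (instantiating the composability condition at $Z=(X\dpitchfork\Delta\beta Y)\dtimes\Delta\alpha X$ with projections and evaluation, to recover the internal Kleisli morphism), and combines this with the equality $\RLift\dox(\mT)=\Asign\dox(\mT)$ established in the proofs of \cref{th:eq,th:grdeq}. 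You instead verify the $\times$-parameterized relative-lifting condition directly, reducing each related pair $((z,\mu),(z',\nu))$ to an application of \cref{thm:dp-seq-comp} on a two-element database space carrying the total adjacency relation. That reduction is sound---relation preservation is a pointwise property, and your $\phi$ and $g$ satisfy the theorem's hypotheses exactly---and it has the extra merit of showing that the two-point instance of sequential composition already suffices. What the paper's detour buys is reuse: its equivalences between liftings, assignments, and composable families never mention $\rDP$, and the same machinery then handles statistical distance, KL, Hellinger, and $\chi^2$ via \cref{thm:graded-comp-assign}, whereas your encoding would have to be repeated for each divergence. One small repair to your construction: take the auxiliary space to be a fresh two-point set, say $W=\{0,1\}$ with $\phi(0)\teq\mu$, $\phi(1)\teq\nu$, $g(0,x)\teq f(z,x)$, and $g(1,x)\teq f(z',x)$; as written, your space $\{z,z'\}$ degenerates to a singleton when $z=z'$ (which reflexivity of $\sim_Z$ permits even with $\mu\neq\nu$), making $\phi$ ill-defined.
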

\aa{We haven't introduced this $\RLift\times$ notation for gradings before, I
  think}




\subsection{Proof of \cref{th:eq} and \cref{th:grdeq}}

\newcommand{\Id}{\mathrm{Id}}

We show the non-graded version (\cref{th:eq}); the graded version
(\cref{th:grdeq}) is similar.

Let $\Delta\in\RLift\dox(\mT)$. We define $\dot T_\Delta X=\Delta pX$. We
show that it is a $\dox$-parameterized $!$-lifting, that is,
\begin{displaymath}
  f:X\dox LpY\darrow\Delta pZ\implies \pklift f:X\dox\Delta pY\darrow\Delta pZ.
\end{displaymath}
This is true by the assumption.

Conversely, let $\dot T\in\BLift\dox(\mT)$.  We define
$\Delta_{\dot T} I=\dot TLI$. We show that it is a $\dox$-parameterized
$L$-relative lifting, that is,
\begin{displaymath}
  f:X\dox LI\darrow\dot TLJ\implies \pklift f:X\dox\dot TLI\darrow\dot TLJ.
\end{displaymath}
This is also true from $!LI=LI$ and the assumption.

We show that the above processes are equivalence of preorders. We
first have
\begin{displaymath}
  \Delta_{\dot T_\Delta}I=\dot T_\Delta LI=\Delta pLI=\Delta I.
\end{displaymath}
Next, we have
$\dot T_{\Delta_{\dot T}}X=\Delta_{\dot T}pX=\dot TLpX\le \dot TX$.
We show the converse $\dot TX\le\dot TLpX$. Since $\dot T$ is a
$!$-lifting, we have $\eta_{pLpX}:!LpX\darrow\dot TLpX$. From
$pL=\Id$, we have $\eta_{pX}:!X\darrow\dot TLpX$. As $\dot T$ is a
$\dox$-parameterized $!$-lifting, we obtain
$\klift{\eta_{pX}}=\id_{TpX}:\dot TX\darrow\dot TLpX$, that is,
$\dot TX\le\dot TL pX$. Therefore
$\dot T\simeq\dot T_{\Delta_{\dot T}}$ holds in the preorder
$\BLift\dox(\mT)$.

This finishes the equivalence of $\BLift\dox(\mT)$ and $\RLift\dox(\mT)$.

Next, suppose that $\Delta\in\Asign\dox(\mT)$. We infer:
\begin{displaymath}
    \infer{f:X\dox FI\darrow \Delta J}{
      \infer{\lambda(f):X\darrow I\pitchfork\Delta J}{
        \infer{\lambda(f)\dox\id_{TJ}:X\dox \Delta I\darrow (I\pitchfork\Delta J)\dox \Delta I}{\pklift{ev}\circ(\lambda(f)\dox\id_{TJ}):X\dox \Delta I\darrow\Delta J}}}
\end{displaymath}
At the last step, we use the $\dox$-strong assignment. Now
\begin{align*}
  \pklift{ev}\circ(\lambda(f)\dox\id_{TJ})
  &=
    \mu\circ T(ev)\circ\theta\circ(\lambda(f)\dox T\id_J)\\
  &=
    \mu\circ T(ev\circ \lambda(f)\dox \id_J)\circ\theta\\
  &=
    \mu\circ T(f)\circ\theta\\
  &=
    \pklift f.
\end{align*}
Therefore $\Delta\in\RLift\dox(\mT)$.

Conversely, suppose that $\Delta\in\RLift\dox(\mT)$.  Since $p$ is the
map of adjunction from $-\dox LI\dashv I\dpitchfork -$ to
$-\dox I\dashv I\dmultimap -$, we have
\begin{displaymath}
  ev:(I\dpitchfork \Delta J)\dox LI\darrow\Delta J.
\end{displaymath}
Therefore we obtain
\begin{displaymath}
  \pklift {ev}:(I\dpitchfork \Delta J)\dox \Delta I\darrow\Delta J,
\end{displaymath}
that is, $\Delta\in\Asign\dox(\mT)$. We conclude that $\RLift\dox(\mT)=\Asign\dox(\mT)$.

\subsection{Proof of \cref{pp:rdpcomp}}

Before the proof, we introduce an auxiliary concept abstracting the
composition properties of divergences studied in differential privacy
as \cref{thm:dp-seq-comp}. This concept is valid when the symmetric
monoidal structure $(\dunit,\dox)$ assumed on $\EE$ in
\eqref{eq:monref} is cartesian.
\begin{definition}\label{def:compfam}
  Assume that in \eqref{eq:monref} the symmetric monoidal structure on
  $\EE$ is cartesian (hence we denote it by $(\dot 1,\dtimes)$).  An
  {\em $M$-graded sequentially composable family of $\EE$-objects}
  above $\mT$ is a monotone function
  $\Delta: {(M,\le)} \arrow\Ord(p,T)$ such that for any
  $f:Z\darrow \Delta \alpha X$ and
  $g:Z\dtimes LX\darrow \Delta \beta Y$, we have
  %
  %
  \begin{displaymath}
    \pklift g\circ\langle \id_{pZ},f\rangle:Z\darrow\Delta(\alpha\cdot \beta)Y.
  \end{displaymath}
  the subpreorder of $[(M,\le),\Ord(p,T)]$ consisting of $M$-graded
  sequentially composable families of $\EE$-objects above $\mT$ is
  denoted by $\Comp(\mT,M)$.
\end{definition}
Let us see how this definition expands in the \Weak monoidal refinement
\eqref{eq:rsrelprod}, and a monad $\mT$ on
$\Set$:
\begin{displaymath}
  \xymatrix{
    (\RSRel,\dot 1,\dtimes) & & \Set \adjunction{ll}{M}{q} \ar@(ru,rd)^-\mT
  }
\end{displaymath}
An $M$-graded sequentially composabile family of $\RSRel$-objects
assigns to each set $TX$ and $\alpha,\beta\in M$, a reflexive,
symmetric relation $\sim_X^m$ satisfying:
\begin{align*}
  & (\fa{z\sim z'}f(z)\sim_X^\alpha f(z'))\\
  & \wedge(\fa{z\sim z',x}g(z,x)\sim^\beta_Y g(z',x))\\
  & \implies\fa{z\sim z'}\pklift{g}(z,f(z))\sim^{\alpha\cdot \beta}_Y\pklift{g}(z',f(z'))
\end{align*}

\begin{theorem}
  $\Asign\dtimes(\mT,M)=\Comp(\mT,M)$.
\end{theorem}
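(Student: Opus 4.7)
The plan is to prove the two inclusions by constructions mirroring the correspondence between assignments and $!$-liftings in \cref{th:grdeq}. Throughout, the key tool is the currying/uncurrying adjunction $-\dtimes LX\dashv X\pitchfork-$ supplied by the weakly closed monoidal refinement together with the universal property of the cartesian product in $\EE$.

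For $\Asign\dtimes(\mT,M)\subseteq\Comp(\mT,M)$, assume $\Delta\in\Asign\dtimes(\mT,M)$ and suppose $f:Z\darrow\Delta\alpha X$ and $g:Z\dtimes LX\darrow\Delta\beta Y$. Curry $g$ to obtain $\lambda(g):Z\darrow X\pitchfork\Delta\beta Y$, then form $\langle\lambda(g),f\rangle:Z\darrow (X\pitchfork\Delta\beta Y)\dtimes\Delta\alpha X$. Instantiating the assignment hypothesis with function grade $\beta$ and input grade $\alpha$ yields $\kl^\mT:(X\pitchfork\Delta\beta Y)\dtimes\Delta\alpha X\darrow\Delta(\alpha\cdot\beta)Y$, so post-composition produces a morphism $Z\darrow\Delta(\alpha\cdot\beta)Y$ in $\EE$. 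A short diagram chase in $\BB$ using $\kl^\mT=\pklift{ev}$, the equality $g=ev\circ(\lambda(g)\dtimes\id_{LX})$, and the naturality of the strength in its non-monadic argument shows that this composite coincides with $\pklift g\circ\langle\id_{pZ},f\rangle$, which is exactly the sequential composability conclusion.

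For the converse $\Comp(\mT,M)\subseteq\Asign\dtimes(\mT,M)$, let $\Delta\in\Comp(\mT,M)$. To verify the assignment condition at parameters $\alpha,\beta$, instantiate the composable family at $Z=(X\pitchfork\Delta\alpha Y)\dtimes\Delta\beta X$, taking $f=\pi_2:Z\darrow\Delta\beta X$ of grade $\beta$, and $g=ev\circ(\pi_1\dtimes\id_{LX}):Z\dtimes LX\darrow\Delta\alpha Y$ of grade $\alpha$, where $ev$ is the counit of the $-\dtimes LX\dashv X\pitchfork-$ adjunction. The composable family condition then produces $\pklift g\circ\langle\id_Z,\pi_2\rangle:Z\darrow\Delta(\beta\cdot\alpha)Y$. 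Expanding $g$ and applying the dual strength-naturality identity together with $(\pi_1\dtimes\id_{TX})\circ\langle\id_Z,\pi_2\rangle=\id_Z$ identifies this composite with $\pklift{ev}=\kl^\mT$, establishing the assignment inequality in $\EE$.

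Both constructions are identity-on-underlying-data, so the pointwise preorder on $[(M,\le),\Ord(p,T)]$ is automatically respected and the two subpreorders coincide. The main technical care is in keeping the grade orderings straight ($\alpha\cdot\beta$ versus $\beta\cdot\alpha$), since $M$ is not assumed commutative; the asymmetry is resolved by always pairing the grade of the internal-hom object with the \emph{outer} Kleisli step in each direction, which is how the assignment definition assigns grade $\beta\cdot\alpha$ when the function has grade $\alpha$ and the input has grade $\beta$.
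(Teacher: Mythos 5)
Your proof is correct and follows essentially the same route as the paper: for $\Asign\dtimes(\mT,M)\subseteq\Comp(\mT,M)$ you curry $g$, pair with $f$, and post-compose with $\kl^\mT$ exactly as the paper does, and for the converse you instantiate the composable-family condition at $Z=(X\dpitchfork\Delta\alpha Y)\dtimes\Delta\beta X$ with the projection and the evaluation map, recovering $\kl^\mT=\pklift{ev}$, which is the paper's construction (your $\pi_2$ and $ev\circ(\pi_1\dtimes\id_{LX})$ are the paper's projection and $ev\circ\langle\pi_1\circ\pi_1,\pi_2\rangle$ up to an apparent typo in the paper's projection index). Your attention to the non-commutativity of $M$ and the resulting grade bookkeeping is also consistent with the paper's conventions.
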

\begin{proof}
  Let $\Delta\in\Comp(\mT,M)$. That is, the following implication
  holds:
  \begin{align*}
    & f:Z\darrow\Delta\alpha X\wedge
      g:Z\dtimes LX\darrow\Delta nY\\
    & \implies\pklift g\circ\langle\id_{pZ},f\rangle:Z\darrow\Delta(m\cdot n)Y
  \end{align*}
  We instantiate the premise of the sequential composition condition
  with the following data ($X,Y$ are left unchanged)
  \begin{align*}
    Z&=(X\dpitchfork\Delta\beta Y)\dtimes\Delta\alpha X\\
    f&=\pi_1:Z\darrow\Delta\alpha X\\
    g&=\lambda(ev\circ\langle \pi_1\circ\pi_1,\pi_2\rangle):Z\dtimes LX\darrow\Delta\beta Y.
  \end{align*}
  We then obtain
  \begin{displaymath}
    \pklift g\circ\langle\id,f\rangle:
    (X\dpitchfork\Delta\beta Y)\dtimes\Delta\alpha X\darrow \Delta(\alpha\cdot\beta) Y,
  \end{displaymath}
  and we have $\kl=\pklift g\circ\langle\id,f\rangle$. Therefore
  $\Delta\in\Asign\dtimes(\mT,M)$.

  Conversely, suppose that $\Delta\in\Asign\dtimes(\mT,M)$. We have
  the following construction of morphisms in $\EE$:
  \begin{displaymath}
    \inferrule{
      \inferrule{
        \inferrule{g:Z\dtimes L X\darrow\Delta\beta Y}
        {\lambda(g):Z\darrow X\dpitchfork\Delta\beta Y} \\
        f:Z\darrow\Delta\alpha X}
      {\langle \lambda(g),f\rangle:Z\darrow (X\dpitchfork\Delta\beta Y)\dtimes\Delta\alpha X}
    }
    {\pklift g\circ\langle\id,f\rangle=
      \kl\circ\langle \lambda(g),f\rangle:Z\darrow\Delta(\alpha\cdot\beta) Y}
  \end{displaymath}
  Therefore $\Delta\in\Comp(\mT,M)$.
\end{proof}
As a result, the sequential composability (\cref{thm:dp-seq-comp}) of
differential privacy is equivalent to that the indistinguishability
relation is a graded sequentially composable family:
  \begin{align*}
    \rDP
    &\in\Comp(\distmonad,\R_{\ge 0}^{+}\times\R_{\ge 0}^{+})=
      \Asign\times(\distmonad,\R_{\ge 0}^{+}\times\R_{\ge 0}^{+})
  \end{align*}
where $\R_{\ge 0}^{+}$ is the additive monoid of nonnegative real
numbers.

\subsection{Properties of the Path Adjunction}

As a basic sanity check, converting the path metric of a relation back
into a relation with the at-most-one operation yields the original
relation.
\begin{proposition}
  \label{prop:path-inv}
  For any $X \in \RSRel$, $QPX = X$.
\end{proposition}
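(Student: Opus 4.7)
The plan is to unpack both sides directly and show that the carriers and the relations coincide. Since both $Q$ and $P$ act as the identity on underlying sets, we immediately have $|QPX| = |PX| = |X|$, so the work is to prove that the at-most-one relation induced by the path metric $d_{PX}$ equals the original relation $\sim_X$.

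For the inclusion $\sim_{QPX} \subseteq \sim_X$, I would take $(x,x')$ with $d_{PX}(x,x') \leq 1$ and split on the two possible path lengths. A path of length $0$ forces $x = x'$, and reflexivity of $\sim_X$ yields $x \sim_X x'$. A path of length $1$ is exactly a pair $x_0 = x, x_1 = x'$ with $x_0 \sim_X x_1$, giving $x \sim_X x'$ directly. For the reverse inclusion, if $x \sim_X x'$ then the sequence $x, x'$ is a path of length $1$ in the graph of $\sim_X$, so $d_{PX}(x,x') \leq 1$, i.e., $(x,x') \in \sim_{QPX}$.

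There is no real obstacle: both steps are immediate from the definitions of $P$ (\cref{def:path-metric}) and $Q$ (\cref{def:at-most-one}). The only subtle point worth remarking on is the role of reflexivity of $\sim_X$, which is needed to cover the length-$0$ case of the path metric; this is precisely why working in $\RSRel$ (rather than with arbitrary relations) makes the identity $QPX = X$ clean. Symmetry of $\sim_X$ is used implicitly to ensure $d_{PX}$ is symmetric, but not for the equality of relations itself.
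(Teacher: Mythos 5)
Your proof is correct; the paper states \cref{prop:path-inv} without proof, treating it as an immediate sanity check, and your direct unfolding of \cref{def:path-metric} and \cref{def:at-most-one}---carriers agree, then a two-way inclusion of relations with a case split on path length $0$ versus $1$---is exactly the intended argument. Your remark that reflexivity of $\sim_X$ is what absorbs the length-zero paths (and that symmetry plays no role in the relation equality itself) is also accurate.
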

Composing in the opposite order $PQX$ does not yield $X$ in general.
Consider, for instance, a metric space $X=(\{x,y\},d)$ where
$d(x,y) = 2$. However, $P$ and $Q$ do form an adjoint pair.

\begin{proposition}
  \label{prop:path-adj}
  For any $X \in \RSRel$ and $Y \in \Met$, we have
  $\Met(PX, Y) = \RSRel(X, QY)$. In particular, the functors $P$ and
  $Q$ form an adjoint pair $P \dashv Q$, whose unit is identity
  morphism.
\end{proposition}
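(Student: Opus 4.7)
The plan is to establish the hom-set bijection directly by exploiting the fact that both $P$ and $Q$ act as the identity on underlying sets and morphisms; consequently, both $\Met(PX,Y)$ and $\RSRel(X,QY)$ are subsets of $\Set(|X|,|Y|)$, and it suffices to show these subsets coincide. Naturality will then come for free since the bijection is literally an equality, and the claim about the unit will follow from \cref{prop:path-inv} ($QPX = X$), since the identity on $|X|$ is then automatically a morphism $X \rearrow QPX$ in $\RSRel$.

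For the inclusion $\Met(PX,Y) \subseteq \RSRel(X,QY)$, I would take a non-expansive $f : PX \nearrow Y$ and any pair $x \sim_X x'$. By definition of the path metric, $x, x'$ are connected by a path of length $1$, so $d_{PX}(x,x') \le 1$; non-expansiveness then gives $d_Y(f(x),f(x')) \le 1$, which is exactly $f(x) \sim_{QY} f(x')$. This step is essentially a definition unfolding.

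For the reverse inclusion, I would take $f : X \rearrow QY$ and any $x,x' \in |X|$, and show $d_Y(f(x),f(x')) \le d_{PX}(x,x')$. If the right-hand side is $\infty$ there is nothing to prove; otherwise let $x = x_0, x_1, \ldots, x_k = x'$ be a shortest path witnessing $d_{PX}(x,x') = k$, so $x_i \sim_X x_{i+1}$ for each $i$. Since $f$ preserves the relation, $f(x_i) \sim_{QY} f(x_{i+1})$, i.e.\ $d_Y(f(x_i),f(x_{i+1})) \le 1$. Iterating the triangle inequality in $Y$ then yields $d_Y(f(x),f(x')) \le k$, as required.

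The main (very mild) subtlety is ensuring the argument handles $d_{PX}(x,x') = 0$, i.e.\ $x = x'$, and the degenerate case $k = 0$ of the path; both are covered by $d_Y(f(x),f(x)) = 0$. With both inclusions in hand, the equality $\Met(PX,Y) = \RSRel(X,QY)$ is natural in $X$ and $Y$ (since $P$ and $Q$ act as identity on morphisms, naturality squares commute trivially in $\Set$), so $P \dashv Q$. Finally, the unit $\eta_X : X \to QPX$ is the adjoint transpose of $\id_{PX}$; since $\id_{PX}$ is just $\id_{|X|}$ as a set map, its transpose is again $\id_{|X|}$, which is a valid $\RSRel$-morphism $X \rearrow QPX$ by \cref{prop:path-inv}.
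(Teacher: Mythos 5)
Your proof is correct and follows essentially the same route as the paper's: both directions are established by the same definition-unfolding argument (a related pair gives a path of length $1$, hence distance at most $1$; conversely, a path of length $k$ maps to a chain of points at pairwise distance at most $1$, and iterating the triangle inequality bounds $d_Y(f(x),f(x'))$ by $k$). Your additional remarks on the $\infty$ and $k=0$ cases, naturality, and the identification of the unit via $QPX = X$ are fine elaborations of points the paper leaves implicit, not a different method.
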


\begin{proof}
Let us show that the first term is contained in the second.  Suppose
we have a morphism $f : PX \to Y$.  Take $x$ and $x'$ in $X$ such that
$x \sim x'$.  We have $d(x, x') \leq 1$ in $PX$, thus
$d(f(x), f(x')) \leq 1$ by non-expansiveness---that is,
$f(x) \sim f(x')$ in $QY$ by definition.  This shows that $f$ is also
a morphism in $X \to QY$.

To conclude, we just need to show the reverse inclusion. Suppose we
have a morphism $f : X \to QY$.  Showing that $f$ is a non-expansive
function $PX \to Y$ is equivalent to showing that, given a path
$x_0 \sim \cdots \sim x_k$ of length $k$ in $X$, $d(x_0, x_k) \leq k$.
Since $f : X \to QY$, we know that for every $i < k$ the relation
$f(x_i) \sim_{QY} f(x_{i+1})$ holds; by definition, this means that
$d(f(x_i), f(x_{i+1})) \leq 1$.  Applying the triangle inequality
$k - 1$ times, we conclude $d(f(x_0), f(x_k)) \leq k$.
\end{proof}

Since the unit is identity, we conclude that $P$ is {\em full and
  faithful}.  In the current setting, this fact can be phrased as
follows:
\begin{corollary} \label{cor:path-faithful} Let $X$ and $Y$ be objects
  of $\RSRel$. Then $f:X\rearrow Y$ if and only if $f:PX\nearrow PY$.
\end{corollary}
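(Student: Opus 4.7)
The plan is to establish both directions directly from the adjunction $P \dashv Q$ (Proposition~\ref{prop:path-adj}) together with the identity $QPX = X$ recorded in Proposition~\ref{prop:path-inv}. Since $P$ acts as the identity on underlying sets and morphisms (by Definition~\ref{def:path-metric}), the statement amounts to showing that a function $f:|X|\arrow|Y|$ preserves the relations if and only if it is non-expansive with respect to the induced path metrics.

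For the forward direction, I would simply invoke functoriality of $P$ as stated in Definition~\ref{def:path-metric}: if $f:X\rearrow Y$, then $Pf = f$ is a morphism $PX\nearrow PY$. To spell this out without appeal to functoriality, I would observe that for any pair $(x,x')$ with $d_{PX}(x,x') = k < \infty$, there is a path $x = x_0\sim_X x_1\sim_X\cdots\sim_X x_k = x'$, which $f$ sends to a path $f(x_0)\sim_Y\cdots\sim_Y f(x_k)$ of the same length, witnessing $d_{PY}(f(x),f(x'))\le k$. The case $d_{PX}(x,x') = \infty$ is vacuous.

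For the backward direction, suppose $f:PX\nearrow PY$. By Proposition~\ref{prop:path-adj}, the adjunction $P\dashv Q$ with identity unit identifies $\Met(PX,PY)$ with $\RSRel(X,QPY)$. By Proposition~\ref{prop:path-inv}, $QPY = Y$, so $f$ transports to a morphism $X\rearrow Y$ in $\RSRel$, which, since the unit is the identity, is literally the same underlying function $f$. A direct argument works equally well: given $x\sim_X x'$, we have $d_{PX}(x,x')\le 1$, hence $d_{PY}(f(x),f(x'))\le 1$ by non-expansiveness, and by the definition of $Q$ together with $QPY = Y$ this yields $f(x)\sim_Y f(x')$.

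I do not expect any real obstacle: both directions are immediate consequences of the two propositions that precede the corollary, and the proof is essentially a one-line invocation of the adjunction together with fullness and faithfulness of $P$ (which is exactly what the corollary restates in elementary terms). The only care required is to keep track of the fact that $P$ and $Q$ act as the identity on carriers and morphisms, so that the adjunction bijection does not alter the function $f$ under consideration.
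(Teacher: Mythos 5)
Your proof is correct and follows essentially the same route as the paper: the corollary is exactly the paper's observation that $P$ is full and faithful, obtained by transporting morphisms across the adjunction of Proposition~\ref{prop:path-adj} (whose unit is the identity) and applying $QPY = Y$ from Proposition~\ref{prop:path-inv}. Your supplementary direct arguments (paths map to paths; distance $\le 1$ unfolds to relatedness via $Q$) just inline the proofs of those two propositions, so nothing genuinely new is needed or introduced.
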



The adjunction $P \dashv Q$ preserves much---but not all---of the structure in
$\RSRel$ and $\Met$. Combined with the previous results, these properties yield
a proof of \cref{th:transpath}. We summarize these properties below.  First,
discrete spaces are preserved.

\begin{lemma}
  \label{lem:discreteness-preservation}
  For any set $X$, $P(\infty\cdot X) = \infty\cdot X \in \Met$, and
  $Q(\infty\cdot X) = \infty\cdot X \in \RSRel$.
\end{lemma}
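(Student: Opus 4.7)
The plan is to unpack what the discrete objects $\infty\cdot X$ mean in each category and then compute $P$ and $Q$ directly on them; both equalities follow immediately from the definitions, so this should really be a short unfolding argument.

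First I would recall that, in $\RSRel$, the object $\infty\cdot X$ is the image of the set $X$ under the left adjoint $M$ of the forgetful functor $q$, hence it carries the diagonal relation: $x\sim_{MX}x'$ iff $x=x'$. In $\Met$, the object $\infty\cdot X$ is the set $X$ equipped with the metric that is $0$ on the diagonal and $\infty$ off the diagonal (the special case $r=\infty$ in \cref{fig:metric-spaces}).

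For the first equality, I would apply \cref{def:path-metric} directly to $MX$. A path $x_0,\ldots,x_k$ with $x_i\sim_{MX}x_{i+1}$ for every $i$ forces $x_0=x_1=\cdots=x_k$, since the only related pairs are those on the diagonal. Hence if $x=x'$ the empty path of length $0$ gives $d(x,x')=0$, and if $x\neq x'$ no such path exists, so $d(x,x')=\infty$. This is exactly the metric of $\infty\cdot X$ in $\Met$, and since $P$ acts as the identity on underlying sets, we get $P(\infty\cdot X)=\infty\cdot X$.

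For the second equality, I would apply \cref{def:at-most-one} to $\infty\cdot X\in\Met$. By definition, $Q(\infty\cdot X)$ has relation $\{(x,x')\mid d(x,x')\le 1\}$; but the discrete metric on $\infty\cdot X$ takes only the values $0$ (when $x=x'$) and $\infty$ (when $x\neq x'$), so the condition $d(x,x')\le 1$ holds exactly when $x=x'$. Thus the resulting reflexive symmetric relation is the diagonal, which is precisely $\infty\cdot X\in\RSRel$. I do not anticipate any obstacles here: both parts are pure definition chasing, and the symmetry between the two arguments is essentially an instance of the adjunction $P\dashv Q$ (\cref{prop:path-adj}) applied to a discrete set, where the unit is the identity.
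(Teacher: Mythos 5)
Your proof is correct: both equalities follow from exactly the definition-unfolding you give (a path in the diagonal relation is necessarily constant, so the path metric is the discrete $0$/$\infty$ metric, and the discrete metric satisfies $d(x,x')\le 1$ only on the diagonal, recovering the diagonal relation). The paper states \cref{lem:discreteness-preservation} without any proof at all, treating it as an immediate sanity check, so your argument is precisely the intended one; the closing remark about the adjunction is harmless but not needed, since neither equality formally follows from $P\dashv Q$ alone.
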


The path functor $P:\RSRel\arrow\Met$ strictly preserves all products
(including infinite ones), and symmetric monoidal structure.
\begin{lemma}
  \label{lem:product-preservation}
  Let $(X_i)_{i \in I}$ be a family of objects of $\RSRel$.  Then
  $P\left(\prod_i X_i\right) = \prod_i PX_i$. The metric on a
  general product of metric spaces is defined by taking the supremum
  of all the metrics.
\end{lemma}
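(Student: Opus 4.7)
The plan is to check the equality carrier-wise and then metric-wise. Since $P$ leaves carriers unchanged and the product in $\RSRel$ has carrier $\prod_i |X_i|$ (the same as the product in $\Met$), both sides of the claimed equality have the underlying set $\prod_i |X_i|$. So the content of the statement is the metric identity
\begin{equation*}
  d_{P(\prod_i X_i)}((x_i)_i,(y_i)_i) \;=\; \sup_i d_{P X_i}(x_i,y_i).
\end{equation*}
I would prove each inequality separately.

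For $\geq$, I would exploit that each projection $\pi_i : \prod_j X_j \rearrow X_i$ is a morphism in $\RSRel$, so by functoriality $P\pi_i$ is non-expansive. Applying non-expansiveness to a fixed pair $((x_j)_j,(y_j)_j)$ yields $d_{PX_i}(x_i,y_i) \le d_{P(\prod_j X_j)}((x_j)_j,(y_j)_j)$ for every $i$, and taking the supremum on $i$ gives the desired bound.

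For $\leq$, I would construct a path of the optimal length by padding with reflexivity. Let $k_i \teq d_{PX_i}(x_i,y_i)$ and $k \teq \sup_i k_i$. If $k = \infty$ there is nothing to prove, so assume $k < \infty$; in particular every $k_i$ is finite. For each $i$ pick a $\sim_{X_i}$-path $x_i = z^i_0 \sim \cdots \sim z^i_{k_i} = y_i$, and extend it to length $k$ by setting $z^i_j \teq y_i$ for $k_i \le j \le k$; reflexivity of $\sim_{X_i}$ ensures consecutive entries remain related. Assembling coordinatewise produces a sequence $(z^i_j)_i$, $j=0,\ldots,k$, in which consecutive tuples are related in the product relation of $\prod_i X_i$, connecting $(x_i)_i$ to $(y_i)_i$ in $k$ steps. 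Hence $d_{P(\prod_i X_i)}((x_i)_i,(y_i)_i) \le k$.

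The only mildly delicate point is the padding argument and the handling of $\infty$; everything else is bookkeeping. The statement about the product metric on $\Met$ being $\sup$ is a given from Figure~\ref{fig:metric-spaces} (extended to arbitrary indexing), so I would not dwell on it.
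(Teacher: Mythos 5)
Your proposal is correct and follows essentially the same route as the paper's proof: the nontrivial inequality is established by the identical padding-with-reflexivity trick to extend each coordinate path to a common length $k$ and assemble them into a path in the product relation, while the easy inequality comes from functoriality of $P$ (the paper phrases it via the induced identity morphism $P\bigl(\prod_i X_i\bigr) \to \prod_i PX_i$, you via the projections, which is the same argument unpacked). No gaps; your explicit treatment of the $\infty$ case and of attaining the supremum matches what the paper leaves implicit.
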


\begin{proof}
As the underlying sets are equal, it suffices to show that the metrics
are equal.  Let $d_P$ be the metric associated with
$P\left(\prod_i X_i\right)$, and let $d_{\prod}$ be the metric
associated with $\prod_i PX_i$.  Since $P$ is a functor, we know that
the identity function is a morphism in
$P\left(\prod_i X_i\right) \to \prod_i PX_i$; that is,
$d_{\prod}(x, x') \leq d_P(x, x')$ for all families
$x, x' \in \prod_i X_i$.  To conclude, we just need to show the
opposite inequality. If $d_{\prod}(x, x') = \infty$, we are done;
otherwise, there exists some $k \in \N$ such that
$d_{PX_i}(x_i, x'_i) \leq k$ for every $i \in I$.  Since all the
relations associated with the $X_i$ are reflexive, we know that for
every $i$ there must exist a path of related elements
$x_i = x_{i,0} \sim \cdots \sim x_{i,k} = x'_i$ of length exactly $k$,
by padding one of the ends with reflexivity edges.  Therefore, there
exists a path of related families
$(x_i)_{i \in I} = (x_{i,0})_{i \in I} \sim \cdots \sim (x_{i,k})_{i
  \in I} = (x'_i)_{i \in I}$ of length $k$ in $\prod_i X_i$.  By the
definition of the path metric, this means that $d_P(x, x') \leq k$,
and thus $d_P(x, x') \leq d_{\prod}(x, x')$, as we wanted to show.
\end{proof}


\begin{lemma}
  \label{lem:tensor-preservation}
  For every $X, Y \in \RSRel$, $P(X \otimes Y) = PX \otimes PY$.
\end{lemma}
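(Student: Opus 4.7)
The plan is to show both inequalities between the metrics on the common underlying set $|X| \times |Y|$, since by definition $P(X \otimes Y)$ and $PX \otimes PY$ share this carrier. Writing $d_P$ for the path metric on $X \otimes Y$ and $d_\otimes$ for the tensor metric $d_{PX}(x,x') + d_{PY}(y,y')$ on $PX \otimes PY$, it suffices to establish $d_\otimes \le d_P$ and $d_P \le d_\otimes$ pointwise.

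For $d_P \le d_\otimes$, I would take any two points $(x,y), (x',y') \in |X| \times |Y|$ with $d_{PX}(x,x') = m$ and $d_{PY}(y,y') = n$, both finite (the infinite case is trivial). Fix witnessing paths $x = x_0 \sim_X \cdots \sim_X x_m = x'$ in $X$ and $y = y_0 \sim_Y \cdots \sim_Y y_n = y'$ in $Y$. Then the concatenation $(x_0,y_0) \sim (x_1,y_0) \sim \cdots \sim (x_m,y_0) \sim (x_m,y_1) \sim \cdots \sim (x_m,y_n)$ is a valid path of length $m+n$ in $X \otimes Y$: the first $m$ steps fix the second coordinate and use $x_i \sim_X x_{i+1}$, matching the second clause of the tensor relation, and the remaining $n$ steps fix the first coordinate and use $y_j \sim_Y y_{j+1}$, matching the first clause.

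For $d_\otimes \le d_P$, I would take any path $(x,y) = (a_0,b_0) \sim \cdots \sim (a_k,b_k) = (x',y')$ of length $k$ in $X \otimes Y$. By the definition of the tensor relation, each step $(a_i,b_i) \sim (a_{i+1},b_{i+1})$ falls into one of two cases: either $a_i = a_{i+1}$ and $b_i \sim_Y b_{i+1}$, or $b_i = b_{i+1}$ and $a_i \sim_X a_{i+1}$ (when both components could be chosen, pick either). Let $k_X$ be the number of steps of the second kind and $k_Y$ the number of the first kind, so $k_X + k_Y \le k$. The projected sequence $(a_i)_i$ yields a walk in $X$ of length $k$ in which at most $k_X$ transitions are non-identity edges; deleting the identity edges gives a genuine $\sim_X$-path from $x$ to $x'$ of length at most $k_X$, so $d_{PX}(x,x') \le k_X$. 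Symmetrically $d_{PY}(y,y') \le k_Y$, and adding gives $d_\otimes((x,y),(x',y')) \le k_X + k_Y \le k$. Taking the infimum over all paths yields $d_\otimes \le d_P$.

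There is no serious obstacle here: the whole argument rests on the fact that the tensor relation on $X \otimes Y$ lets one change exactly one coordinate at a time, which matches the additivity of the tensor metric on $PX \otimes PY$. The only mild subtlety is taking care of reflexive/identity steps that can appear after projection, but these are absorbed harmlessly since reflexivity of $\sim_X$ and $\sim_Y$ is built into $\RSRel$. Finally, the equality on morphisms is immediate because both functors act as the identity on underlying functions.
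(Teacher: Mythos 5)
Your proof is correct and follows essentially the same route as the paper's: the direction $d_P \le d_\otimes$ uses the identical concatenated path (move the $X$-coordinate first, then the $Y$-coordinate), and your projection-and-counting argument for $d_\otimes \le d_P$ is just an unrolled version of the paper's induction on path length, where each hop in $X \otimes Y$ contributes one unit to exactly one of the two projected paths. The only cosmetic difference is that you handle degenerate (identity) hops explicitly via reflexivity, which the paper leaves implicit.
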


\begin{proof}
Once again, it suffices to show that both metrics are equal.

For all pairs $(x, y)$ and $(x', y')$ in $X \times Y$, there are paths
between $x$ and $x'$ and between $y$ and $y'$ whose summed length is
at most $k$ if and only if there is a path between $(x, y)$ and
$(x', y')$ of length at most $k$ in $X \otimes Y$.  For the ``only
if'' direction, if there are such paths $(x_i)_{i \leq k_x}$ and
$(y_i)_{i \leq k_y}$, then the sequence
\[ (x_0, y_0), (x_1, y_0), \ldots, (x_{k_x}, y_0), (x_{k_x}, y_1),
  \ldots, (x_{k_x}, y_{k_y}) \] is a path from $(x, y)$ and $(x', y')$
in $X \otimes Y$. Conversely, suppose that we have a path from
$(x, y)$ to $(x', y')$ of length at most $k$ in $X \otimes Y$.  We can
show by induction on the length of the path that there are paths from
$x$ to $x'$ and from $y$ and $y'$ with total length at most $k$.  The
base case is trivial.  If there is a hop, by the definition of the
relation for $X \otimes Y$, this hop only adds one unit to the length
of the path on $X$ or to the path on $Y$, which allows us to conclude.
\end{proof}

\subsection{Proof of \cref{th:mapasign}}

Let $\Delta:(M,\le)\arrow\Ord(p,T)$ be an $M$-graded \Par\dox
assignment of $\EE$ on $\mT$.  Since $p\circ F=p'$, $F\circ\Delta$ is
a monotone function of type $(M,\le)\arrow\Ord(p',T)$. By applying $F$
to the internal Kleisli lifting morphism \cref{eq:internal-kleisli},
we obtain
\begin{displaymath}
  \inferrule{kl:(X\pitchfork \Delta m Y)\dox \Delta n X\darrow \Delta(n\cdot m) Y}{
    kl:(X\mathbin{\ddot\pitchfork} F(\Delta m Y))\ddox F(\Delta n X)\darrow F(\Delta(n\cdot m) Y)}
\end{displaymath}
This concludes that $F\circ \Delta:(M,\le)\arrow\Ord(p',T)$ is
a \Par\ddox assignment.

\subsection{Modeling Other Divergences in Graded Fuzz}
\label{app:other-divegences}

\paragraph*{Hellinger Distance}
The \emph{Hellinger} distance is a standard measure of similarity between
distributions originating from statistics. For discrete distributions over $X$,
the Hellinger distance is defined as:
\[
  \dHellinger_X(\mu, \nu) \teq \sqrt{ \frac{1}{2} \sum_{i \in X}
    |\sqrt{\mu(i)} - \sqrt{\nu(i)}|^2 } .
\]
This distance is a proper metric: it satisfies reflexivity, symmetry,
and triangle inequality. However, its behavior under composition means
that we cannot model it with a standard assignment structure. Instead,
we will use a graded assignment. For any $\alpha \geq 0$ and set $X$,
we define:
\begin{align*}
  \rHD(\alpha)(X) & \teq R(\dHellinger)(\alpha)(X) \\
                  & = ( \dist X, \{ (\mu,
                    \nu) \mid \dHellinger_X(\mu, \nu) \leq \alpha, \dHellinger_X(\nu,
                    \mu) \leq \alpha \} )
\end{align*}
Note that this is a reflexive and symmetric relation, hence an object
in $\RSRel$. \citet[Proposition 5]{BartheO13} show the following
composition principle:
\[
  \dHellinger_Y(f^\dagger \mu, g^\dagger \nu)^2 \leq
  \dHellinger_X(\mu, \nu)^2 + \sup_{x \in X} \dHellinger_Y(f(x),
  g(x))^2
\]
Note that this is for \emph{squared} Hellinger distance. Taking square
roots, we have the following composition property for standard
Hellinger distance:
\[
  \dHellinger_Y(f^\dagger \mu, g^\dagger \nu) \leq
  \sqrt{\dHellinger_X(\mu, \nu)^2 + \sup_{x \in X} \dHellinger_Y(f(x),
    g(x))^2}
\]
Hence, $\dHellinger$ is $H$-composable for the monoid
$H = (\R, \leq, 0, +_2)$, where
$\alpha +_2 \beta = \sqrt{ \alpha^2 + \beta^2 }$ and
$\rHD(\alpha)(X)$ has the structure of a $H$-graded
\Par\times
assignment of $\RSRel$ on $\distmonad$ by
\cref{thm:graded-comp-assign}.  By applying the path construction to
this assignment, we get a $H$-graded \Par\times assignment of $\Met$
on $\distmonad$ with which we can then interpret a graded
distribution type capturing Hellinger distance:
\[
  \sem{\Tcons^{\dHellinger}_\alpha \tau} \teq
  P(\rHD(\alpha)(|\sem{\tau}|))
\]
The typing rule for $\bbind$ is then:
\[
  \inferrule
  {\Gamma\vdash e_1:\Tcons^{\dHellinger}_{\alpha}\tau \\
    \Gamma,x:_\infty\tau\vdash e_2:\Tcons^{\dHellinger}_{\beta}\sigma}
  {\Gamma \vdash \bbind x\leftarrow
    e_1;e_2:\Tcons^{\dHellinger}_{\alpha +_2 \beta}\sigma}
\]
Like the other divergences, we can introduce primitives and typing
rules capturing the Hellinger distance. For example, the
natural-valued \emph{Poisson} distribution models the
number of events occurring in some time interval, if the events happen
independently and at constant rate. Given a parameter $\alpha : \N$,
this distribution has the following probability mass function:
\[
  P(\alpha)(n) = \frac{\alpha^n e^{-\alpha}}{n!}
\]
It is known that the Poisson distribution satisfies the following
Hellinger distance bound:
\[
  \dHellinger_{\N}(P(\alpha), P(\alpha')) = 1 - \exp \left(-
    \frac{|\sqrt{\alpha} - \sqrt{\alpha'}|^2}{2}\right) \leq
  \frac{1}{2} |\sqrt{\alpha} - \sqrt{\alpha'}|^2
\]
Given $\alpha$ and $\alpha'$ at most $1$ apart, this Hellinger
distance is at most $1/2$.  Hence, we may introduce a type $\poiD$ and
interpret it as $\sem{\poiD} = \lambda x.\, P(x)$, and the following
typing rule is sound:
\[
  \inferrule { } {\Gamma \vdash \poiD : \lceil \R \rceil \multimap
    \Tcons^{\dHellinger}_{1/2} \N}
\]

\end{document}